%% file: main.tex
\documentclass[onefignum,onetabnum]{siamsials251208}
\usepackage{graphicx}
\usepackage{tikz}
\usepackage{tikz-cd}
\usetikzlibrary{arrows.meta}
\usepackage{enumitem}
\usepackage{url}
\usepackage{algorithm}
\usepackage{algpseudocode}
\usepackage{amsmath}
\usepackage{multirow}

\usepackage{amsfonts}
\usepackage{graphics}
\usepackage[
 letterpaper, top=1.2in, bottom=1.2in, left=1.2in, right=1.2in]{geometry}
\usepackage{xcolor}
\usepackage{amssymb}
\usepackage{psfrag}
\usepackage[utf8]{inputenc}
\usepackage{bm}
\usepackage{mathtools}
\usepackage{epstopdf}
\usepackage{setspace}
\usepackage{amsmath,amssymb,bbm}
\usepackage[utf8]{inputenc}
\usepackage{url}
\usepackage{color}
\usepackage{booktabs}

\input{ex_shared}

\usepackage{bbm}

\allowdisplaybreaks

\begin{document}
\maketitle
\begin{abstract}
Predicting how tightly two biomolecules bind remains a major challenge, in part because different interaction classes present dissimilar interfaces, from compact metal-coordinated pockets to broad, featureless protein surfaces. We introduce persistent manifold learning (PML), a novel computational framework that describes a binding interface as a family of multiscale manifolds. Boundary-Induced Graph Laplacian, a discrete realization of de Rham--Hodge theory, then extracts topological invariants together with nonharmonic spectral information, capturing the geometry of an interface as well as its topology. These manifold embeddings are combined with protein and molecular language model representations and paired with gradient boosting decision trees. Our PML 
outperforms state-of-the-art methods on metalloprotein--ligand and protein--protein benchmarks. 
\end{abstract}

\begin{relevance}
This work addresses a central problem in molecular biology and drug discovery: how to determine, from molecular structure alone, how tightly a molecule will bind to its target. Binding affinity governs therapeutic potency and selectivity, yet measuring it experimentally is slow and costly, making computational prediction essential for prioritizing candidates from large libraries. We apply our framework to two target classes ordinarily handled by separate methods: metalloenzymes, whose metal-coordinated active sites have long been exploited by drugs, and protein--protein interfaces, which regulate signaling and immune response yet resist conventional small molecules. Our results indicate that much of what determines binding strength is encoded in the shape of the interface itself, and that a single geometric description serves both classes without hand-tailored features. This suggests manifold-based representations may extend to other systems in which structure and sequence jointly determine function.
\end{relevance}

\begin{mathcontent}
The core of our approach involves de Rham--Hodge theory on compact Riemannian manifolds with boundary, a foundational tool in differential geometry rarely applied to molecular data. Binding interfaces are modeled as sublevel sets of Gaussian density fields, producing a filtration of manifolds whose topological transitions occur at the critical values of the level set function, as governed by Morse theory. The topology of each manifold is recovered from the kernels of Hodge Laplacians under normal and tangential boundary conditions, whose dimensions are Betti numbers by the Hodge and Friedrichs theorems, while their nonzero spectra reflect geometry. Discretization proceeds through discrete exterior calculus on Cartesian grids, yielding Boundary-Induced Graph Laplacians whose spectral analysis reduces to the singular values of the discrete differentials. Extending this across the filtration leads to persistent Hodge Laplacians on evolving manifolds, combining differential geometry, algebraic topology, and spectral theory into a multiscale representation of molecular shape.
\end{mathcontent}

\begin{keywords} Persistent manifold learning; de Rham-Hodge theory; Hodge Laplacian; discrete exterior calculus; Betti numbers; protein-protein interactions; metalloprotein--ligand binding,  binding affinity prediction.
\end{keywords}

\begin{MSCcodes}
92E10, 55N31
\end{MSCcodes}

\tableofcontents

\newpage
 
\section{Introduction}
Proteins are the basic building blocks of living things, forming the structural and functional core of every cell in the human body. The understanding of protein properties is the essential task of biophysics and molecular biology. A major class of protein properties is protein interactions, which are central to nearly all biological processes, and their dysregulation is closely associated with the onset and progression of many diseases. Understanding how proteins recognize and bind their molecular partners has therefore become an important problem in modern drug discovery~\cite{du2016protein,lu2020protein,nada2024protein,zhao2022brief}. Among the many forms such interactions can take, metalloprotein–ligand interactions (MPLIs), in which small-molecule ligands bind to proteins containing metal ions or metal-coordination environments, and protein–protein interactions (PPIs), in which two proteins interact directly, represent two biologically important classes. MPLIs are a specialized but important subclass of protein-ligand interactions (PLIs), where binding is influenced not only by conventional noncovalent contacts but also by metal coordination, local geometry, charge distribution, and the surrounding protein environment~\cite{bennett2022protein,du2016protein,zhao2022brief}. These metal-centered binding sites provide structurally organized regions for ligand recognition, but they also introduce additional complexity due to coordination preferences, heterogeneous chemical environments, molecular flexibility, solvent effects, and conformational dynamics. In contrast, PPIs often involve broader and flatter interfaces that lack compact ligand-binding pockets. These surface-mediated contacts regulate essential cellular processes such as signaling, immune response, and transcriptional control~\cite{lu2020protein,scott2016small}, while the absence of well-defined pockets makes them difficult to modulate using the small-molecule ligands commonly designed for conventional PLIs.

Despite these structural and mechanistic differences, MPLIs and PPIs are ultimately governed by a common thermodynamic descriptor: binding affinity (BA), which reflects complex stability, target engagement, therapeutic potency, and binding selectivity~\cite{du2016protein,gilson1997statistical}. Since the experimental determination of BA can be costly and time-consuming, computational prediction has emerged as an efficient strategy for screening large molecular spaces and prioritizing promising candidates~\cite{kitchen2004docking}. Such predictions, in turn, rely on informative features that encode how molecular partners interact, enabling complexes to be compared, ranked, and optimized in drug discovery and biological design~\cite{erijman2014structure,vangone2015contacts}. As a free-energy quantity associated with complex formation, BA is strongly shaped by the three-dimensional organization of the binding region, including contact patterns, physicochemical interactions, spatial complementarity, and local packing. In PPIs, this organization is primarily determined by the extended protein-protein interface itself, whereas in MPLIs, it is further shaped by metal-coordination geometry at the binding site. Therefore, a central challenge in BA prediction is to construct molecular representations that preserve binding-relevant structural information across both metal-centered ligand-binding sites and protein-protein interfaces.

For this purpose, existing BA prediction methods often use graph-based descriptors, which model molecular components as nodes and their interactions as edges~\cite{feinberg2018potentialnet,jiang2021interactiongraphnet,li2020monn,moesser2022protein,nguyen2021graphdta}. These descriptors are effective for encoding local contacts and pairwise interaction patterns, but they may not fully describe higher-order structural organization. Since such descriptors serve as the input to downstream machine learning models, the structural information they retain sets an upper bound on predictive accuracy, making descriptor design as consequential as the choice of learning algorithm. Algebraic topology, or persistent homology,  offers a complementary perspective by characterizing this higher-order organization directly, and its combination with machine learning has become an established paradigm for molecular property prediction \cite{townsend2020representation}. In particular, persistent homology captures the appearance and disappearance of connected components, loops, and cavities across a filtration, and the resulting topological invariants have been paired with deep networks and tree-based models for BA prediction~\cite{cang2018representability,cang2017topologynet,cang2018integration,long2025interpretable,xia2014persistent}. However, this approach is insensitive to non-topological changes. Persistent spectral graph (PSG), also known as persistent Laplacian,  was introduced by Wang et al. to overcome this challenge in 2020 \cite{wang2020persistent}. This new method further enriches topological analysis by incorporating spectral information associated with these non-topological changes \cite{liu2024algebraic,memoli2022persistent}.  It has been coupled with machine learning for biomolecular modeling~\cite{chen2022persistent,memoli2022persistent,meng2021persistent,xu2024pldtree}.  
Persistent path Laplacian was proposed to deal with directed networks \cite{wang2023persistentpath}. 
Persistent sheaf Laplacian was introduced to extract  
localized topological information \cite{wei2025persistent}. 
However, these approaches are for point cloud data and cannot be directly applied to data on manifolds. 
The reader is referred to \cite{su2025topological} for a review of topological data analysis and related developments.

The above-mentioned limitation calls for new mathematical modeling.    
Evolutionary de Rham--Hodge theory provides a natural mathematical foundation for this viewpoint. It constructs multiscale manifolds for topological analysis and defines Hodge Laplacian operators on these evolving domains~\cite{chen2021evolutionary}. Building on this framework, persistent de Rham--Hodge Laplacians were also developed in the  Eulerian representation. This approach overcomes the difficulty of the remeshing error due to the Lagrangian representation for manifold topological learning and is applied to image analysis \cite{liu2026manifold} and protein--ligand binding affinity prediction~\cite{su2024persistent}, providing a principled route toward persistent manifold learning (PML), where topology, geometry, and spectral information are integrated for BA prediction.

Building on this foundation, the objective of this work is to extend and demonstrate the utility of PML for BA prediction across two representative systems: metalloprotein–ligand complexes and protein–protein complexes. In the PML framework, molecular structures are first transformed into multiscale manifold representations using Gaussian density functions; persistent Hodge Laplacian descriptors are then computed on level-set domains generated with different isovalues, allowing the resulting features to jointly encode topological persistence and geometric variation. These approaches are subsequently combined with machine learning models to predict binding affinities. We evaluate the proposed approach on two benchmark datasets: the metalloprotein–ligand binding dataset~\cite{jiang2023metalprognet} and the wild-type PDB subset from SKEMPI v2 for protein–protein binding~\cite{jankauskaite2019skempi}. Across both benchmarks, the proposed PML model achieves state-of-the-art performance, demonstrating its effectiveness and generality for BA prediction in structurally distinct binding systems.

The remainder of this article is structured as follows. Section~\ref{PML_framework} introduces the PML framework for BA prediction. Section~\ref{results} presents results on Metalloprotein-ligand and protein-protein BA prediction. Section~\ref{Methods} lays out the mathematical foundations of PML, centered on de Rham-Hodge theory and persistent de Rham-Hodge Laplacians. Section~\ref{Conclusion} concludes with a summary of findings and future directions.

\section{Modeling}\label{Methods}

This section presents the mathematical foundations underlying the PML framework introduced in Section~\ref{PML_framework}. We first review the de Rham-Hodge theory, which provides a unified description of the topological and geometric properties of a manifold through the spectra of Hodge Laplacians. We then present the discretization of the theory on regular Cartesian grids, and introduce the Boundary-Induced Graph Laplacian used to generate the manifold embedding features in Section~\ref{Feature_Generation}. Finally, we describe its extension to the persistent setting by introducing persistent Hodge Laplacians that track the evolution of topological and geometric structures across a filtration of manifolds, together with their discretization for practical computation.

\subsection{De Rham--Hodge Theory}\label{sec.deRhamHodge}

De Rham--Hodge theory connects the topology of a manifold to the analysis of differential forms defined on it, and provides the mathematical foundation for the manifold embedding features used in this work. This subsection reviews the elements needed later: the differential and its cohomology, which are determined by topology alone, and the Hodge Laplacian, whose kernel recovers that topology while its nonzero spectrum reflects geometry.

Throughout, let $M$ be an $m$-dimensional smooth, orientable, compact Riemannian manifold with boundary $\partial M$, and let $\Omega^k(M)$ denote the space of differential $k$-forms on $M$. Differential forms are the objects that can be integrated over $k$-dimensional submanifolds, and this integrability is what ultimately makes them amenable to discretization.

The differential $d:\Omega^k(M)\to\Omega^{k+1}(M)$, also called the exterior derivative, is the unique $\mathbb{R}$-linear map that satisfies the Leibniz rule with respect to the wedge product and the nilpotent property
\begin{align}\label{eq.dd}
	dd = 0.
\end{align}
It is characterized by Stokes' theorem: for any oriented $(k\!+\!1)$-submanifold $S\subset M$ with boundary $\partial S$,
\begin{align}\label{eq.stokes}
	\int_S d\omega = \int_{\partial S}\omega,
\end{align}
which states that differentiating a form and integrating over a domain is the same as integrating the form over the boundary of that domain. Equation \eqref{eq.stokes} is more than a classical identity here: it is the property that transfers $d$ to the discrete setting, where it becomes the signed incidence matrix between cells (Section~\ref{laplacian_construction}).

The nilpotency \eqref{eq.dd} gives $d$ its topological content. A form $\omega\in\Omega^k(M)$ is closed if $d\omega = 0$ and exact if $\omega = d\zeta$ for some $\zeta\in\Omega^{k-1}(M)$; since $dd = 0$, every exact form is closed, but the converse may fail, and the extent of this failure measures the holes of $M$. Making this precise, $d$ assembles the spaces $\Omega^k(M)$ into the de Rham complex
\begin{align}
	0\longrightarrow\Omega^0(M)\overset{d}{\longrightarrow}\Omega^1(M)\overset{d}{\longrightarrow}\cdots\overset{d}{\longrightarrow}\Omega^m(M)\longrightarrow 0,
\end{align}
whose $k$-th cohomology is the de Rham cohomology group
\begin{align}
	H^k_{dR}(M) = \frac{\ker\left(d:\Omega^k(M)\to\Omega^{k+1}(M)\right)}{\operatorname{im}\left(d:\Omega^{k-1}(M)\to\Omega^k(M)\right)}.
\end{align}
By the de Rham theorem, $H^k_{dR}(M)$ is naturally isomorphic to the singular cohomology of $M$ with real coefficients. It is therefore a topological invariant: it depends only on the topology of $M$, and not on its metric.

So far no metric has been used, and $d$ alone yields no notion of adjointness and hence no Laplacian. Both require a way to measure forms, which the Riemannian metric supplies. Let $g$ be a metric on $M$ with pointwise inner product $\langle\cdot,\cdot\rangle_g$ and volume form $\mu_g$. The Hodge star $\star:\Omega^k(M)\to\Omega^{m-k}(M)$ is the isomorphism defined by
\begin{align}\label{eq.hodgestar}
	\omega\wedge\star\eta = \langle\omega,\eta\rangle_g\,\mu_g,
\end{align}
which pairs each $k$-form with a complementary $(m\!-\!k)$-form so that their wedge product is a top-degree form, and hence integrable. Integrating \eqref{eq.hodgestar} gives the Hodge $L^2$-inner product on $\Omega^k(M)$,
\begin{align}\label{eq.l2inner}
	(\omega,\eta) = \int_M\omega\wedge\star\eta.
\end{align}
With an inner product available, $d$ admits a formal adjoint. The codifferential $\delta:\Omega^k(M)\to\Omega^{k-1}(M)$ is defined by
\begin{align}\label{eq.codiff}
	\delta = (-1)^{m(k-1)+1}\star d\star,
\end{align}
and inherits the nilpotency $\delta\delta = 0$ from \eqref{eq.dd}. Whereas $d$ raises the degree of a form, $\delta$ lowers it; composing the two in both orders yields a degree-preserving operator on $\Omega^k(M)$, which is the central object of this subsection.

\begin{definition}[Hodge Laplacian]\label{def.hodgeLaplacian}
	The Hodge Laplacian on $\Omega^k(M)$ is the operator
	\begin{align}
		\Delta = d\delta + \delta d: \Omega^k(M)\to\Omega^k(M).
	\end{align}
	A form $\omega\in\Omega^k(M)$ is called harmonic if $\Delta\omega = 0$, and the space of harmonic $k$-forms is denoted by $\mathcal{H}^k_\Delta(M)$. A form is called coclosed if $\delta\omega = 0$, and the space $\mathcal{H}^k(M) = \ker d\cap\ker\delta$ of closed and coclosed $k$-forms is called the space of harmonic fields.
\end{definition}

\begin{remark}[Vector calculus in $\mathbb{R}^3$]\label{rem.vectorcalculus}
	Both $d$ and $\delta$ generalize the classical operators of vector calculus. In $\mathbb{R}^3$, $0$-forms and $3$-forms are identified with scalar fields, while $1$-forms and $2$-forms are identified with vector fields. Under these identifications, $d$ acts as the gradient $\nabla$ on $0$-forms, the curl $\nabla\times$ on $1$-forms, and the divergence $\nabla\cdot$ on $2$-forms, while $\delta$ corresponds to $-\nabla\cdot$, $\nabla\times$, and $-\nabla$ when applied to $1$-, $2$-, and $3$-forms, respectively. The nilpotent properties $dd = 0$ and $\delta\delta = 0$ then recover the familiar identities $\nabla\times\nabla f = 0$ and $\nabla\cdot(\nabla\times\mathbf{v}) = 0$.
\end{remark}

When $M$ is boundaryless, i.e., $\partial M = \emptyset$, the operators $d$ and $\delta$ are $L^2$-adjoint with respect to \eqref{eq.l2inner}, so that $(d\omega,\eta) = (\omega,\delta\eta)$. Consequently, whenever $\Delta\omega = 0$,
\begin{align}\label{eq.harmonicsplit}
	0 = (\Delta\omega,\omega) = (d\omega,d\omega) + (\delta\omega,\delta\omega),
\end{align}
so that harmonic forms are simultaneously closed and coclosed, and hence $\mathcal{H}^k_\Delta(M) = \mathcal{H}^k(M)$ in this case. The harmonic space then carries exactly the topological content of $M$.

\begin{theorem}[Hodge theorem~\cite{hodge1941theory}]\label{thm.hodge}
	Let $M$ be a closed, orientable Riemannian manifold. Then the space of harmonic $k$-forms is finite-dimensional and isomorphic to the $k$-th de Rham cohomology group,
	\begin{align}
		\mathcal{H}^k_\Delta(M)\cong H^k_{dR}(M),
	\end{align}
	so that $\dim\mathcal{H}^k_\Delta(M) = \beta_k$, the $k$-th Betti number of $M$.
\end{theorem}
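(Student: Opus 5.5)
The plan is the standard elliptic-theory route. The map I would use is $\Phi:\mathcal{H}^k_\Delta(M)\to H^k_{dR}(M)$, $\omega\mapsto[\omega]$. It is well defined because, by \eqref{eq.harmonicsplit}, every harmonic form on a closed manifold is closed (indeed closed and coclosed). So the statement reduces to two claims: $\Phi$ is injective and surjective, and $\mathcal{H}^k_\Delta(M)$ is finite-dimensional. The Betti number identity then follows at once from the de Rham theorem quoted above, which identifies $H^k_{dR}(M)$ with real singular cohomology.

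\textbf{Injectivity.} This is pure linear algebra with the $L^2$ product \eqref{eq.l2inner}. Suppose $\omega$ is harmonic and $\omega=d\zeta$. Since $\delta\omega=0$ and $d,\delta$ are adjoint when $\partial M=\emptyset$, we get $(\omega,\omega)=(d\zeta,\omega)=(\zeta,\delta\omega)=0$, hence $\omega=0$. The same computation shows that the three spaces $\mathcal{H}^k_\Delta(M)$, $d\Omega^{k-1}(M)$ and $\delta\Omega^{k+1}(M)$ are mutually $L^2$-orthogonal.

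\textbf{Surjectivity (main obstacle).} Surjectivity needs the Hodge decomposition
\[
\Omega^k(M)=\mathcal{H}^k_\Delta(M)\oplus d\Omega^{k-1}(M)\oplus\delta\Omega^{k+1}(M).
\]
Given this, take a closed form $\omega=h+d\alpha+\delta\beta$. Closedness gives $0=(d\omega,\beta)=(d\delta\beta,\beta)=\|\delta\beta\|^2$, so $\delta\beta=0$ and $[\omega]=[h]=\Phi(h)$.

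To establish the decomposition I would use elliptic theory for $\Delta$:
\begin{enumerate}[label=(\roman*)]
\item Check that $\Delta$ is a second-order elliptic operator. Its principal symbol is $|\xi|^2\,\mathrm{id}$, as one sees in local coordinates, e.g.\ via the Weitzenb\"ock formula $\Delta=\nabla^*\nabla+\text{curvature}$.
\item Prove the G\aa rding inequality $\|\omega\|_{H^1}^2\le C\bigl(\|d\omega\|^2+\|\delta\omega\|^2+\|\omega\|^2\bigr)$ on the closed manifold $M$. This uses a partition of unity and comparison with the flat Laplacian.
\item Combine (ii) with Rellich compactness of $H^1\hookrightarrow L^2$. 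This shows that the unit ball of $\mathcal{H}^k_\Delta$ is compact, which gives finite-dimensionality. It also shows that $\Delta$ has closed range on the orthogonal complement of its kernel.
\item Use elliptic regularity (weak solutions of $\Delta u=f$ with $f$ smooth are smooth) to get
\[
\Omega^k(M)=\mathcal{H}^k_\Delta(M)\oplus\Delta\bigl(\Omega^k(M)\bigr).
\]
Writing $\Delta G\eta=d(\delta G\eta)+\delta(dG\eta)$, where $G$ is the Green operator inverting $\Delta$ on the complement of the kernel, yields the three-term splitting above.
\end{enumerate}

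The genuine analytic difficulty lies in steps (ii)--(iv): regularity and closed range. In a paper of this nature I would cite these standard results (Warner, Ch.~6; or Hodge's original argument) rather than reprove them. The rest of the argument is formal.
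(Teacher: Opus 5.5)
Your proposal is correct. The formal steps (closedness of harmonic forms, injectivity via $(\omega,\omega)=(\zeta,\delta\omega)=0$, surjectivity from the three-term decomposition) are right, and the analytic core you defer is the same material the paper defers to. That core is the G\aa rding--Gaffney inequality, Rellich compactness, elliptic regularity and closed range. The paper gives no argument of its own: it cites Hodge and the modern treatment in Schwarz, whose Gaffney-inequality and Hodge--Morrey decomposition argument reduces on a closed manifold to the route you outline.

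One caution on references: for the analytic steps, cite Warner or Schwarz rather than Hodge's 1941 argument. The existence proof in that first edition was incomplete and was later repaired by Weyl and Kodaira.
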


\begin{proof}
	See~\cite{hodge1941theory}; a modern treatment is given in~\cite{schwarz2006hodge}.
\end{proof}

The Betti number $\beta_k$ counts the $k$-dimensional holes of $M$: $\beta_0$ is the number of connected components, $\beta_1$ is the number of tunnels or loops, and $\beta_2$ is the number of enclosed cavities. Theorem~\ref{thm.hodge} therefore identifies the kernel of the Hodge Laplacian with the topology of $M$, which is the property exploited throughout this work.

In the presence of a boundary, the adjointness of $d$ and $\delta$ fails, and \eqref{eq.harmonicsplit} no longer holds. Boundary conditions must therefore be imposed to ensure that the operators are well-posed.

\begin{definition}[Normal and tangential boundary conditions]\label{def.boundaryconditions}
	The spaces of normal and tangential $k$-forms on $M$ are defined, respectively, by
	\begin{align}
		\Omega^k_n(M) &= \{\omega\in\Omega^k(M)\,|\,\omega|_{\partial M} = \delta\omega|_{\partial M} = 0\},\\
		\Omega^k_t(M) &= \{\omega\in\Omega^k(M)\,|\,\star\omega|_{\partial M} = \star d\omega|_{\partial M} = 0\},
	\end{align}
	which are Hodge dual to each other. The corresponding spaces of \emph{normal} and \emph{tangential harmonic fields} are $\mathcal{H}^k_n(M) = \mathcal{H}^k(M)\cap\Omega^k_n(M)$ and $\mathcal{H}^k_t(M) = \mathcal{H}^k(M)\cap\Omega^k_t(M)$. We write $\Delta_n$ and $\Delta_t$ for the Hodge Laplacian restricted to $\Omega^k_n(M)$ and $\Omega^k_t(M)$, respectively.
\end{definition}

\begin{remark}\label{rem.harmonicfields}
	Unlike the boundaryless case, the space of harmonic fields $\mathcal{H}^k(M)$ is in general infinite-dimensional and is only a subset of $\ker\Delta$. It therefore admits no direct correspondence with the cohomology of $M$. Imposing the boundary conditions of Definition~\ref{def.boundaryconditions} restores this correspondence, as stated next.
\end{remark}

\begin{theorem}[Friedrichs~\cite{friedrichs1955differential}]\label{thm.friedrichs}
	Let $H^k_{dR}(M,\partial M)$ and $H^k_{dR}(M)$ denote the relative and absolute de Rham cohomology groups of $M$. Then
	\begin{align}
		\mathcal{H}^k_n(M) = \ker\Delta_n^k \cong H^k_{dR}(M,\partial M)
		\quad\text{and}\quad
		\mathcal{H}^k_t(M) = \ker\Delta_t^k \cong H^k_{dR}(M).
	\end{align}
	In particular, both spaces are finite-dimensional, with
	\begin{align}\label{eq.betti.nt}
		\dim\mathcal{H}^k_n(M) = \beta_{m-k}
		\quad\text{and}\quad
		\dim\mathcal{H}^k_t(M) = \beta_k.
	\end{align}
\end{theorem}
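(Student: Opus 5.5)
The plan is to follow the Hodge--Morrey--Friedrichs decomposition route, as in the treatment of~\cite{schwarz2006hodge}, and reduce everything to the boundaryless argument behind Theorem~\ref{thm.hodge}.

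\textbf{Step 1: Green's formula.} For $\omega\in\Omega^{k-1}(M)$ and $\eta\in\Omega^k(M)$, Stokes' theorem \eqref{eq.stokes} applied to $\omega\wedge\star\eta$ gives
\begin{align}
	(d\omega,\eta)-(\omega,\delta\eta)=\int_{\partial M}\iota^*(\omega\wedge\star\eta),
\end{align}
where $\iota:\partial M\to M$ is the inclusion. The boundary term vanishes whenever $\iota^*\omega=0$ or $\iota^*\star\eta=0$. This is the substitute for the adjointness lost in the presence of boundary.

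\textbf{Step 2: $\ker\Delta_t=\mathcal{H}^k_t(M)$ and $\ker\Delta_n=\mathcal{H}^k_n(M)$.} Take $\omega\in\Omega^k_t(M)$ with $\Delta\omega=0$. The conditions $\star\omega|_{\partial M}=0$ and $\star d\omega|_{\partial M}=0$ are exactly what kills both boundary terms in
\begin{align}
	(\Delta\omega,\omega)=(d\omega,d\omega)+(\delta\omega,\delta\omega).
\end{align}
Hence $d\omega=\delta\omega=0$, just as in \eqref{eq.harmonicsplit}. The reverse inclusion is immediate, since a closed and coclosed form is harmonic. The normal case then follows by applying $\star$, which interchanges $d$ and $\delta$ up to sign and swaps $\Omega^k_t$ with $\Omega^{m-k}_n$.

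\textbf{Step 3: identify $\mathcal{H}^k_t(M)$ with absolute cohomology.} This is the main analytic obstacle. I would establish the Hodge--Morrey--Friedrichs decomposition
\begin{align}
	\Omega^k(M)=d\Omega^{k-1}(M)\oplus\delta\Omega^{k+1}_t(M)\oplus\mathcal{H}^k_t(M),
\end{align}
where the middle summand is $\delta$ applied to forms in $\Omega^{k+1}_t(M)$, i.e.\ those with vanishing normal part. The route is as follows.
\begin{itemize}
	\item First prove a Gaffney inequality $\|\omega\|_{H^1}^2\le C\bigl(\|d\omega\|^2+\|\delta\omega\|^2+\|\omega\|^2\bigr)$ for forms with vanishing normal part. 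This is done in boundary-adapted coordinates with a Bochner--Weitzenböck computation, and the second fundamental form appears in the boundary term.
	\item From Gaffney and Rellich compactness, conclude that $\mathcal{H}^k_t(M)$ is finite-dimensional. Also conclude that $d\Omega^{k-1}$ and $\delta\Omega^{k+1}_t$ have closed range in $L^2$.
	\item Use Green's formula to show that the three summands are $L^2$-orthogonal. Elliptic regularity for the boundary value problem, which satisfies the Lopatinskii--Shapiro condition, shows that weak solutions are smooth.
\end{itemize}
Given the decomposition, a closed form $\omega$ is orthogonal to $\delta\Omega^{k+1}_t(M)$ by Green's formula. So $\omega$ equals an exact form plus its harmonic projection. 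The map $\omega\mapsto[\omega]$ from $\mathcal{H}^k_t(M)$ to $H^k_{dR}(M)$ is therefore surjective. It is injective because a tangential harmonic field that is exact, say $d\zeta$, satisfies $(d\zeta,d\zeta)=(\zeta,\delta d\zeta)=0$, using the vanishing normal part of $d\zeta$.

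\textbf{Step 4: relative cohomology and the dimension counts.} The relative statement follows by the $\star$-duality of Step 2. Equivalently, one can repeat Step 3 with the subcomplex $\{\omega:\iota^*\omega=0\}$, whose cohomology is $H^k_{dR}(M,\partial M)$. Finally, $\dim\mathcal{H}^k_t=\beta_k$ follows from the de Rham theorem. For \eqref{eq.betti.nt}, note that $\star$ gives an isomorphism $\mathcal{H}^k_n(M)\cong\mathcal{H}^{m-k}_t(M)$, so
\begin{align}
	\dim\mathcal{H}^k_n(M)=\beta_{m-k}.
\end{align}
This is Poincaré--Lefschetz duality, realized analytically.

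The main difficulty is the Gaffney inequality together with the boundary elliptic regularity in Step 3; the rest is formal once Green's formula is available.
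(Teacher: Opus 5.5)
Your proposal is correct and takes essentially the paper's approach. The paper's proof is only a citation to Friedrichs and to \cite[Chapter 2]{schwarz2006hodge}, and your Green's formula, Gaffney inequality, Hodge--Morrey--Friedrichs decomposition and harmonic-representative argument are exactly that cited route.

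One point in Step 4 needs tightening. $\star$-duality alone gives $\mathcal{H}^k_n(M)\cong\mathcal{H}^{m-k}_t(M)\cong H^{m-k}_{dR}(M)$, not $H^k_{dR}(M,\partial M)$. So the relative isomorphism must come from your second route, rerunning Step 3 on the complex $\{\omega:\iota^*\omega=0\}$, rather than being ``equivalent'' to the first. Otherwise you would be assuming the Poincar\'e--Lefschetz duality that you claim to realize analytically.
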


\begin{proof}
	See~\cite{friedrichs1955differential} and~\cite[Chapter 2]{schwarz2006hodge}.
\end{proof}

Theorem~\ref{thm.friedrichs} is the foundation of the present framework: it allows the topology of the underlying manifold to be recovered from the kernels of the boundary-constrained Hodge Laplacians, while their nonzero spectra provide additional insight into the geometric structure of $M$. For the compact domains $M\subset\mathbb{R}^3$ considered in this work, the case of interest is the following.

\begin{corollary}\label{cor.betti0}
	Let $M\subset\mathbb{R}^3$ be a compact domain, so that $m = 3$. Then the Hodge Laplacian on $3$-forms under the normal boundary condition satisfies
	\begin{align}
		\dim\ker\Delta_n^3 = \beta_{3-3} = \beta_0,
	\end{align}
	that is, the number of zero eigenvalues of $\Delta_n^3$ equals the number of connected components of $M$.
\end{corollary}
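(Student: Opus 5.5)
The corollary follows by specializing Theorem~\ref{thm.friedrichs} to $m=3$, $k=3$, but I would also give a direct check, since the identity $\dim\mathcal{H}^k_n(M)=\beta_{m-k}$ rests on Poincaré--Lefschetz duality. First I confirm that $M$ meets the hypotheses of Theorem~\ref{thm.friedrichs}. A compact domain $M\subset\mathbb{R}^3$ with smooth boundary is an orientable $3$-manifold: it inherits the standard orientation and Euclidean metric of $\mathbb{R}^3$, and it is compact with boundary $\partial M$. Theorem~\ref{thm.friedrichs} then gives $\ker\Delta_n^3\cong H^3_{dR}(M,\partial M)$, with dimension $\beta_{3-3}=\beta_0$.

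Next I verify the count directly from the definitions. By \eqref{eq.hodgestar}, a $3$-form on $M$ is $\omega=f\,\mu_g$ for a smooth function $f$, and $\star\omega=f$. Since $\Omega^4(M)=0$, we have $d\omega=0$ automatically. Then $\delta\omega=\pm\star d\star\omega=\pm\star df$, so $\delta\omega=0$ exactly when $df=0$, that is, when $f$ is locally constant.

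The normal condition $\omega|_{\partial M}=0$ means the pullback of a $3$-form to the $2$-dimensional $\partial M$, which vanishes identically and so imposes nothing. The remaining condition $\delta\omega|_{\partial M}=0$ is also satisfied once $df=0$. Hence $\mathcal{H}^3_n(M)$ consists of the forms $f\,\mu_g$ with $f$ constant on each connected component, and its dimension is the number of components, $\beta_0$.

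The step I expect to be the main obstacle is reconciling $\ker\Delta_n^3$ with $\mathcal{H}^3_n(M)$, i.e.\ showing that every normal form with $\Delta\omega=0$ is closed and coclosed. Here $\Delta\omega=d\delta\omega$. Green's formula gives $(d\delta\omega,\omega)=(\delta\omega,\delta\omega)$ plus a boundary term, and that term vanishes under the normal condition $\omega|_{\partial M}=0$. So $\Delta\omega=0$ forces $\delta\omega=0$, as in \eqref{eq.harmonicsplit}. Since Theorem~\ref{thm.friedrichs} already states this equality, I would simply cite it and keep the Green's formula argument as the supporting justification. Finally, each zero eigenvalue of $\Delta_n^3$ corresponds to one dimension of this kernel, which gives the stated count of connected components.
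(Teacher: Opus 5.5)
Your main argument, specializing Theorem~\ref{thm.friedrichs} to $m=k=3$, is exactly the paper's proof. The paper's proof is the single line citing \eqref{eq.betti.nt}. Your direct identification of $\mathcal{H}^3_n(M)$ with $\{f\mu_g : df=0\}$, of dimension $\beta_0$, is also correct.

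There is one slip in your supplementary Green's-formula step. For $\omega=f\mu_g$, the boundary term in $(d\delta\omega,\omega)=(\delta\omega,\delta\omega)+\int_{\partial M}\iota^*(\delta\omega)\wedge\iota^*(\star\omega)$ equals $\pm\int_{\partial M} f\,\partial_\nu f\,dA$. It is killed by the second normal condition $\delta\omega|_{\partial M}=0$, which is a Neumann condition on $f$. It is not killed by $\omega|_{\partial M}=0$, which you yourself noted is vacuous for $3$-forms.

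This is not cosmetic. With only the vacuous condition, $\ker\Delta^3_n$ would contain $h\,\mu_g$ for every harmonic function $h$ on $M$ and would be infinite-dimensional. So in the kernel computation, $\delta\omega|_{\partial M}=0$ is the essential constraint, not a by-product of $df=0$ as it is in your harmonic-field check. Since you ultimately rest on citing Theorem~\ref{thm.friedrichs}, the proof stands once that attribution is corrected.
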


\begin{proof}
	Immediate from \eqref{eq.betti.nt} with $m = 3$ and $k = 3$.
\end{proof}

The discrete counterpart of $\Delta_n^3$, denoted $L_{3,n}$, is the operator used to construct the manifold embedding features of Section~\ref{PML_framework}; its discretization is presented next.

\subsection{Discretization and construction of Laplacians}\label{laplacian_construction}

The discretization of the de Rham--Hodge framework can be realized using discrete exterior calculus (DEC)~\cite{desbrun2006discrete}. This approach ensures consistency with the continuous theory and enables correct computation of the manifold cohomology. In addition, DEC provides straightforward treatment of the boundary conditions, and reduces the computation to matrix algebra with symmetric Hodge Laplacian matrices, thereby allowing for efficient numerical implementation. The DEC-based discretization framework has been developed in two different ways: the Eulerian formulation~\cite{su2024persistent,su2024topology}, in which the underlying manifold $M$ is represented as a sublevel set of a level-set function on the Cartesian grid, and the Lagrangian formulation~\cite{chen2021evolutionary}, which discretizes $M$ as a simplicial mesh. The Eulerian representation fixes all vertices, edges, faces, and cells to the grid, which substantially simplifies the resulting data structures and algorithms. It also eliminates the need to explicitly tessellate the domain into simplicial meshes, and a filtration of the manifold can be constructed simply by adjusting the isovalues. Due to these advantages, we adopt the Eulerian formulation developed in~\cite{su2024persistent}. Let $I_m$ be an $m$-dimensional Cartesian grid that contains the manifold $M$, where each $k$-cell is oriented according to its alignment with the coordinate axes. Below, we first present the construction of the discrete differential operators on the entire grid and then restrict them to $M$. Due to the Hodge duality between the normal and tangential boundary conditions, here we consider only the normal boundary conditions.

On the Cartesian grid $I_m$, a differential $k$-form $\omega$ can be discretized by integrating it over each oriented $k$-cell $\sigma_i$, i.e., $W^i = \int_{\sigma_i}\omega$. This is precisely the discrete counterpart of Stokes' theorem \eqref{eq.stokes}: the discrete exterior derivative $D^I_k$ is given by the signed incidence matrix between $(k\!+\!1)$-cells and $k$-cells, which is identified with the transpose of the cell boundary operator $\partial_{k+1}^T$. Since $\partial\partial = 0$ for the boundary of a boundary, this immediately gives $D^I_{k+1}D^I_k = 0$, mirroring the continuous identity $dd = 0$~\cite{desbrun2006discrete}. The discrete Hodge star establishes a correspondence between $k$-forms on the primal grid and $(m\!-\!k)$-forms on the dual grid (formed by translating grid points to cell centers), via
\begin{align}
	\frac{1}{|\sigma_k|}\int_{\sigma_k}\omega \approx \frac{1}{|\star\sigma_k|}\int_{\star\sigma_k}\star\omega,
\end{align}
which yields a diagonal matrix $S^I_k$ with diagonal entries equal to the ratio of dual $(m\!-\!k)$-cell to primal $k$-cell volumes, given by $\ell^{m-2k}$ for a grid of edge length $\ell$. The discrete codifferential is assembled from $D^I_k$ and $S^I_k$ as $\delta^I_k = (S^I_{k-1})^{-1}(D^I_{k-1})^TS^I_k$, and the discrete Hodge Laplacian, obtained as the counterpart of $\star\Delta$ to preserve symmetry, is given by
\begin{align}
	L^I_{k} = (D^I_{k})^TS^I_{k+1}D^I_{k} + S^I_{k}D^I_{k-1}(S^I_{k-1})^{-1}(D^I_{k-1})^TS^I_{k}.
\end{align}

For the discrete operators on $M$, note that the manifold boundary $\partial M$ generally intersects grid cells rather than aligning with them. The normal boundary condition is enforced by restricting the operators to the normal support of cells, where a $k$-cell belongs to the support if at least one of its vertices lies inside or on $\partial M$. Denote by $P_{k,n}$ the corresponding projection matrix onto the normal support. The differential and Hodge star operators for $M$ are then given by
\begin{align}
	D_{k,n} = P_{k+1,n}D^I_kP_{k,n}^T, \quad S_{k,n} = P_{k,n}S^I_{k,n}P_{k,n}^T,
\end{align}
where $S^I_{k,n}$ is obtained by adjusting the cell volumes in $S^I_k$ according to the normal boundary conditions following~\cite{su2024persistent}. Using these discrete operators, one can assemble the discrete Hodge Laplacian under the normal boundary conditions as follows:
\begin{align}\label{eq.discreteHodge}
	L_{k,n} = D_{k,n}^TS_{k+1,n}D_{k,n} + S_{k,n}D_{k-1,n}S_{k-1,n}^{-1}D_{k-1,n}^TS_{k,n},
\end{align}
whose null space, as in the continuous theory of Theorem~\ref{thm.friedrichs}, is fully determined by the manifold topology, with dimension given by the Betti number $\beta_{m-k}$. Beyond the kernel, the nonzero eigenvalues of the Laplacian further encode the geometric structure of $M$. For instance, the smallest nonzero eigenvalue (the Fiedler value) reflects the manifold's connectivity~\cite{fiedler1973algebraic}, while eigenvalue multiplicities can reveal underlying symmetries.

In the Eulerian setting, the discrete Hodge star is nearly a rescaled identity matrix, since $\ell^{m-2k}$ is constant on the interior of the grid and departs from this value only in the cells meeting the boundary. The Hodge Laplacian \eqref{eq.discreteHodge} can therefore be well approximated by replacing the Hodge star with the identity matrix, which leads to the following operator.

\begin{definition}[Boundary-Induced Graph Laplacian~\cite{ribando2024graph}]\label{def.BIGLaplacian}
	Let $D_{k,n}$ be the discrete differential on the normal support of $M$. The Boundary-Induced Graph (BIG) Laplacian on discrete $k$-forms under the normal boundary condition is
	\begin{align}\label{eq.BIGLaplacians}
		L^B_{k,n} = D_{k,n}^TD_{k,n} + D_{k-1,n}D_{k-1,n}^T,
	\end{align}
	that is, the discrete Hodge Laplacian \eqref{eq.discreteHodge} with every Hodge star matrix replaced by the identity.
\end{definition}

The spectra of the BIG Laplacians have been shown to converge to those of the corresponding Hodge Laplacians up to a scaling factor~\cite{ribando2024graph}. Therefore, they can similarly be used to capture the topological and geometric information of $M$ through their spectra as the Hodge Laplacians, while being computationally cheaper, since their construction does not require Hodge star matrices.

The spectral analysis of the Hodge and BIG Laplacians can be treated uniformly by writing a generic Laplacian $L_k = D_k^TS_{k+1}D_k + S_kD_{k-1}S_{k-1}^{-1}D_{k-1}^TS_k$, which reduces to the Hodge or BIG Laplacian depending on whether $S_k$ is the true Hodge star or the identity matrix. The spectrum of $L_k$ can be obtained by considering the generalized eigenvalue problem $L_kW = \lambda S_kW$, which can also be reformulated as the ordinary eigenvalue problem $\bar L_k\bar W = \lambda\bar W$ through the change of variables
\begin{align}
	\bar D_k = S_{k+1}^{1/2}D_kS_k^{-1/2}, \qquad \bar L_k = S_k^{-1/2}L_kS_k^{-1/2}, \qquad \bar W = S_k^{1/2}W.
\end{align}
Here the transformed Laplacian is of the form
\begin{align}
	\bar L_k = \bar D_k^T\bar D_k + \bar D_{k-1}\bar D_{k-1}^T.
\end{align}
Let $\bar D_k = U_{k+1}\Sigma_kV_k^T$ be the singular value decomposition of $\bar D_k$. As shown in~\cite{su2024persistent}, the spectrum of $\bar L_k$ is given by the union of the squared nonzero singular values of $\bar D_k$ and $\bar D_{k-1}$, together with the zero eigenvalue whose multiplicity is the $(m\!-\!k)$-th Betti number. Moreover, the space of discrete $k$-forms is spanned by the harmonic forms together with the columns of $U_k$ and $V_k$ associated with the nonzero singular values.

For compact domains in $\mathbb{R}^3$, there are four Laplacians $L_{k,n}$, $k = 0,1,2,3$, of different degrees under the normal boundary conditions. The spectral analysis of these Laplacians can therefore be reduced to the analysis of the singular value spectra of the three discrete differentials $\bar D_0$, $\bar D_1$, and $\bar D_2$.

\begin{remark}\label{rem.L3n}
	Among these four operators, the present work uses only $L^B_{3,n}$, the BIG Laplacian on discrete $3$-forms under the normal boundary condition, which we denote simply by $L_{3,n}$. By Corollary~\ref{cor.betti0}, the number of its zero eigenvalues equals the $0$-th Betti number $\beta_0$, so that its spectrum supplies $\beta_0$ together with a sequence of nonzero eigenvalues reflecting the geometry of $M$. Higher-order Laplacians could provide additional descriptors, but are not pursued here for reasons of computational efficiency.
\end{remark}

\subsection{Persistent Hodge Laplacian}\label{sec.persistentHodge}

This section presents the construction of the persistent Hodge Laplacian on a filtration of manifolds obtained by varying a single filtration parameter, which was first introduced in~\cite{chen2021evolutionary}, with the discretization framework for the Eulerian formulation presented later in~\cite{su2024persistent}. Compared with the Hodge Laplacian at a single scale, the persistent Hodge Laplacian captures how the manifold topology and geometry change across multiple scales, and therefore provides substantially more information than that of any single manifold.

In practice, a filtration of manifolds can be constructed directly through a level set function. To be specific, let $f:N\to[a,b]$ be a function defined on the ambient space $N$ with compact sublevel sets under consideration. We assume $f$ is a Morse function, as the set of Morse functions is dense in the space of smooth functions on a compact manifold. We then choose evenly distributed isovalues $a\leq c_0<c_1<\cdots<c_s\leq b$ and perturb them slightly, if necessary, so that none of these isovalues are critical values of $f$. This is always possible, as the critical values are isolated and finite for compact manifolds. Each sublevel set $M_i = \{x\in N\,|\,f(x)\leq c_i\}$, $i=0,\ldots,s$, is therefore a smooth manifold with boundary $\partial M_i = \{x\in N\,|\,f(x)=c_i\}$. These manifolds form a filtration linked by the inclusion maps
\begin{align}\label{eq.filtration}
	M_0\hookrightarrow M_1\hookrightarrow M_2\hookrightarrow\cdots\hookrightarrow M_{s-1}\hookrightarrow M_s.
\end{align}
In addition, we assume that $M_{i,i+1} = \overline{M_{i+1}\backslash M_i} = \{x\in N\,|\,f(x)\in[c_i,c_{i+1}]\}$ contains at most one critical point for each $i=0,\ldots,s$, which can always be realized by refining the sequence of isovalues. In the case that $M_{i,i+1}$ contains no critical points, the manifold $M_{i+1}$ retracts to $M_i$ along the gradient flow of $f$, and thus they are homotopy equivalent, meaning that no topological changes occur between $c_i$ and $c_{i+1}$. If $M_{i,i+1}$ contains one critical point, then the topological change happens precisely at the critical value of $f$. Depending on the type of the critical point (i.e., local minimum, saddle point, or local maximum), the topology changes in different ways~\cite{su2024persistent}.

Now let $M_i\hookrightarrow M_{i+1}$ be a consecutive pair of manifolds in the filtration. A filtration-induced extension from $\Omega^k_n(M_i)$ to $\Omega^k_n(M_{i+1})$ that extends each normal $k$-form on $M_i$ to a normal $k$-form on $M_{i+1}$ can be realized by solving a biharmonic equation $\Delta^2\zeta = \Delta(\Delta\zeta) = 0$ on $M_{i,i+1}$ with both Dirichlet and Neumann boundary conditions to ensure that $d\zeta$ joins smoothly with $\omega$ through $\partial M_i$. The extended form $\bar\omega$ on $M_{i+1}$ can then be defined so that $\bar\omega = \omega$ on $M_i$ and $\bar\omega = d\zeta$ on $M_{i,i+1}$. Note that $d\zeta$ satisfies the normal boundary condition on $\partial M_{i+1}$, so that $\bar\omega$ is normal, i.e., $\bar\omega\in\Omega^k_n(M_{i+1})$. Denote by $\mathcal{I}^k_{i,i+1}:\Omega^k_n(M_i)\to\Omega^k_n(M_{i+1})$ this extension. The following commutative diagram then follows from the uniqueness of the solution to the boundary value problem
\begin{equation}
\label{eq:comDiagm1}
\begin{tikzcd}
	\Omega^{0}_n(M_0) \arrow{r}{d} \arrow{d}{\mathcal{I}^0_{0,1}} & \Omega^1_n(M_0) \arrow{r}{d} \arrow{d}{\mathcal{I}^1_{0,1}} & \Omega^{2}_n(M_0) \arrow{r}{d} \arrow{d}{\mathcal{I}^2_{0,1}}  & \Omega^{3}_n(M_0) \arrow{d}{\mathcal{I}^3_{0,1}} \\
	\Omega^{0}_n(M_1) \arrow{r}{d} \arrow{d}{\mathcal{I}^0_{1,2}} & \Omega^1_n(M_1)  \arrow{r}{d} \arrow{d}{\mathcal{I}^1_{1,2}} & \Omega^{2}_n(M_1)  \arrow{r}{d} \arrow{d}{\mathcal{I}^2_{1,2}}  & \Omega^{3}_n(M_1)  \arrow{d}{\mathcal{I}^3_{1,2}} \\
	\Omega^{0}_n(M_2)  \arrow{r}{d} \arrow{d}{\mathcal{I}^0_{2,3}} & \Omega^1_n(M_2) \arrow{r}{d} \arrow{d}{\mathcal{I}^1_{2,3}} & \Omega^{2}_n(M_2) \arrow{r}{d} \arrow{d}{\mathcal{I}^2_{2,3}}  & \Omega^{3}_n(M_2) \arrow{d}{\mathcal{I}^3_{2,3}} \\
	\cdots  & \cdots  & \cdots  & \cdots
\end{tikzcd}
\end{equation}
where each row represents the de Rham complex for each manifold in the filtration, and each column represents the extensions induced by the filtration.

For a pair of manifolds $M_i\hookrightarrow M_j$ with $i\leq j$, we consider the following commutative diagram
\begin{equation}
\label{eq:comDiagm2}
\begin{tikzcd}
	{ } && {\Omega_n^{k}(M_{i})} && {\Omega_n^{k+1}(M_{i})} \\
	\\
	{\Omega_n^{k-1}(M_{j})} && {\Omega_n^{k}(M_{j})} && { }
	\arrow["{d_{j}^{k-1}}", shift left, from=3-1, to=3-3]
	\arrow["{\mathcal{R}_{i,j}}", shift left, from=3-3, to=1-3]
	\arrow["{\delta_{j}^k}", shift left, from=3-3, to=3-1]
	\arrow["{d_{i}^k}", shift left, from=1-3, to=1-5]
	\arrow["{{\delta}_{i,j}^k}", shift left=-2, from=1-3, to=3-1]
	\arrow["{\mathcal{I}_{i,j}}", shift left, from=1-3, to=3-3]
	\arrow["{\delta_{i}^{k+1}}", shift left, from=1-5, to=1-3]
	\arrow["{{d}_{i,j}^{k-1}}", shift left=4, from=3-1, to=1-3]
\end{tikzcd}.
\end{equation}
Here $\mathcal{I}_{i,j} = \mathcal{I}_{j-1,j}\circ\cdots\circ\mathcal{I}_{i,i+1}$ denotes the extension map from $\Omega^k_n(M_i)$ to $\Omega^k_n(M_j)$; $d_i$ and $\delta_i$ (resp. $d_j$ and $\delta_j$) denote the differential and codifferential on $\Omega^k(M_i)$ (resp. $\Omega^k(M_j)$); $\mathcal{R}_{i,j}$ is the projection from $\Omega^k_n(M_j)$ onto the space spanned by the extensions followed by the restriction to $M_i$; and
\begin{align}\label{eq.persistentOperators}
	{\delta}_{i,j}^k = \delta^k_j\circ\mathcal{I}_{i,j},
	\qquad
	{d}_{i,j}^{k-1} = \mathcal{R}_{i,j}\circ d^{k-1}_j.
\end{align}
These two operators are adjoint with respect to the Hodge inner product \eqref{eq.l2inner}, which allows the following operator to be defined consistently with the single-scale case.

\begin{definition}[Persistent Hodge Laplacian]\label{def.persistentHodge}
	Let $M_i\hookrightarrow M_j$ be a pair of manifolds in the filtration \eqref{eq.filtration} with $i\leq j$, and let $d^{k-1}_{i,j}$ and $\delta^k_{i,j}$ be the operators defined in \eqref{eq.persistentOperators}. The \emph{$k$-th persistent Hodge Laplacian} $\Delta^k_{i,j}:\Omega^k_n(M_i)\to\Omega^k_n(M_i)$ under the normal boundary condition is
	\begin{align}\label{eq.persistentHodgeLaplacian}
		\Delta^k_{i,j} = d^{k-1}_{i,j}\delta^k_{i,j} + \delta^{k+1}_i d^k_i.
	\end{align}
	Its kernel $\mathcal{H}^k_{i,j} = \ker\Delta^k_{i,j}$ is called the space of \emph{persistent normal harmonic fields}.
\end{definition}

\begin{remark}\label{rem.persistentReduction}
	The persistent Hodge Laplacian generalizes the operator of Section~\ref{sec.deRhamHodge} rather than replacing it. When $i = j$, the extension $\mathcal{I}_{i,i}$ is the identity, so that $d^{k-1}_{i,i} = d^{k-1}_i$ and $\delta^k_{i,i} = \delta^k_i$, and \eqref{eq.persistentHodgeLaplacian} reduces exactly to the Hodge Laplacian $\Delta^k_n$ on $M_i$ under the normal boundary condition. The single-scale theory is thus recovered as the special case in which no evolution of the manifold is considered.
\end{remark}

\begin{theorem}[\cite{chen2021evolutionary}]\label{thm.persistentBetti}
	The space of persistent normal harmonic fields admits the identification
	\begin{align}
		\mathcal{H}^k_{i,j} = \ker\delta^k_{i,j}\cap\ker d^k_i,
	\end{align}
	and is isomorphic to the $k$-th persistent de Rham cohomology group of the pair $M_i\hookrightarrow M_j$. In particular, $\mathcal{H}^k_{i,j}$ is finite-dimensional, with dimension given by the $(m\!-\!k)$-th persistent Betti number.
\end{theorem}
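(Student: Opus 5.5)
Since $\Delta^k_{i,j}=d^{k-1}_{i,j}\delta^k_{i,j}+\delta^{k+1}_i d^k_i$ is a sum of two terms, the plan is to mimic the energy argument \eqref{eq.harmonicsplit} that identified harmonic forms with harmonic fields in the boundaryless case. First I would record the two adjointness relations the argument needs. For the persistent pair this is stated just before Definition~\ref{def.persistentHodge}: $d^{k-1}_{i,j}$ and $\delta^k_{i,j}$ are adjoint for the Hodge inner product \eqref{eq.l2inner}. For the single-scale pair $d^k_i$ and $\delta^{k+1}_i$ on $M_i$, adjointness is restored by the normal boundary condition of Definition~\ref{def.boundaryconditions}. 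Indeed, the boundary term in Green's formula is $\int_{\partial M_i}\omega\wedge\star\eta$, and it vanishes because $\omega|_{\partial M_i}=0$.

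With both adjointness relations available, take $\omega\in\Omega^k_n(M_i)$ with $\Delta^k_{i,j}\omega=0$. Pairing with $\omega$ gives
\begin{align}
0=(\Delta^k_{i,j}\omega,\omega)_{M_i}=(\delta^k_{i,j}\omega,\delta^k_{i,j}\omega)_{M_j}+(d^k_i\omega,d^k_i\omega)_{M_i}.
\end{align}
Both terms are nonnegative, so $\ker\Delta^k_{i,j}\subseteq\ker\delta^k_{i,j}\cap\ker d^k_i$. The reverse inclusion is immediate from the definition. This proves the identification $\mathcal{H}^k_{i,j}=\ker\delta^k_{i,j}\cap\ker d^k_i$.

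Next, for the isomorphism with persistent cohomology, I would use the relative version, since $\Omega^k_n$ computes $H^k_{dR}(M,\partial M)$ by Theorem~\ref{thm.friedrichs}. Define the $k$-th persistent cohomology of the pair as
\begin{align}
\ker d^k_i\big/\bigl(\ker d^k_i\cap\operatorname{im} d^{k-1}_{i,j}\bigr),
\end{align}
that is, closed normal forms on $M_i$ modulo those that become exact after extension to $M_j$ and restriction back to $M_i$ via $\mathcal{R}_{i,j}$. The commutativity of \eqref{eq:comDiagm1} ensures that $\mathcal{I}_{i,j}$ is a cochain map, so this quotient is well defined. By adjointness, $\ker\delta^k_{i,j}=(\operatorname{im} d^{k-1}_{i,j})^\perp$ inside $\Omega^k_n(M_i)$. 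Hence $\mathcal{H}^k_{i,j}$ is the orthogonal complement of $\operatorname{im} d^{k-1}_{i,j}$ within $\ker d^k_i$. Provided $\operatorname{im} d^{k-1}_{i,j}$ is closed, this complement maps isomorphically onto the quotient above, which is a persistent Hodge decomposition.

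The main obstacle is the analytic step: showing that $\operatorname{im} d^{k-1}_{i,j}$ is closed, which yields finite dimensionality. The difficulty is that $\mathcal{R}_{i,j}$ involves a projection that is not obviously of a form covered by standard elliptic theory. I would first try to realize $\operatorname{im} d^{k-1}_{i,j}$ as $\mathcal{R}_{i,j}$ applied to the closed subspace of exact normal forms on $M_j$, which is closed by the Hodge--Morrey--Friedrichs decomposition \cite{schwarz2006hodge}. I would then use that $\mathcal{R}_{i,j}$ is an orthogonal projection followed by restriction, and that $\mathcal{I}_{i,j}$ is a bounded right inverse coming from the well-posed biharmonic problem. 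From there I would argue closedness, perhaps after completing the spaces to Sobolev spaces.

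Finally, I would identify the dimension. The quotient is the image of $H^k_{dR}(M_i,\partial M_i)\to H^k_{dR}(M_j,\partial M_j)$ induced by the extension. By Lefschetz--Poincar\'e duality, this image has dimension equal to the rank of the dual map in absolute homology in degree $m-k$. That rank is the $(m-k)$-th persistent Betti number, matching \eqref{eq.betti.nt} in the case $i=j$.
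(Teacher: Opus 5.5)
Your first step is fine. Granting the stated adjointness of $d^{k-1}_{i,j}$ and $\delta^k_{i,j}$, Green's formula for $d^k_i,\delta^{k+1}_i$ on normal forms gives $\ker\Delta^k_{i,j}=\ker\delta^k_{i,j}\cap\ker d^k_i$ exactly as in \eqref{eq.harmonicsplit}. (The paper gives no argument of its own; it defers to \cite{chen2021evolutionary} and, for the discrete case, to \cite{su2024persistent}.) The gap is the passage to cohomology, which rests on two claims you do not prove.

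First, reading $\ker d^k_i\cap(\operatorname{im}d^{k-1}_{i,j})^\perp$ as ``the orthogonal complement of $\operatorname{im}d^{k-1}_{i,j}$ within $\ker d^k_i$'' presupposes $\operatorname{im}d^{k-1}_{i,j}\subseteq\ker d^k_i$, i.e.\ $d^k_i\,\mathcal{R}_{i,j}\,d^{k-1}_j=0$. The commutativity of \eqref{eq:comDiagm1} does not give this. Take $\mathcal{R}_{i,j}$ adjoint to $\mathcal{I}_{i,j}$, which is what your energy identity and $\ker\delta^k_{i,j}=(\operatorname{im}d^{k-1}_{i,j})^\perp$ use. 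Then $d_j\mathcal{I}_{i,j}=\mathcal{I}_{i,j}d_i$ dualizes formally to $\mathcal{R}_{i,j}\delta_j=\delta_i\mathcal{R}_{i,j}$, which intertwines codifferentials, not differentials.

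Second, identifying $\ker d^k_i/(\ker d^k_i\cap\operatorname{im}d^{k-1}_{i,j})$ with the image of $H^k_{dR}(M_i,\partial M_i)\to H^k_{dR}(M_j,\partial M_j)$ requires
\[
\ker d^k_i\cap\mathcal{R}_{i,j}\bigl(d\,\Omega^{k-1}_n(M_j)\bigr)=\{\omega\in\ker d^k_i:\ \mathcal{I}_{i,j}\omega\in d\,\Omega^{k-1}_n(M_j)\}.
\]
Your gloss (``become exact after extension \dots{} and restriction back'') conflates the two sides. Even with $\mathcal{R}_{i,j}\mathcal{I}_{i,j}=\mathrm{id}$ you get only ``$\supseteq$'', because projecting an exact form on $M_j$ onto the span of the extensions need not yield an exact form. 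Relatedly, you use $\mathcal{R}_{i,j}$ both as the adjoint of $\mathcal{I}_{i,j}$ and as its left inverse. These are compatible only where $\mathcal{I}_{i,j}$ preserves $L^2$ inner products, whereas $\|\mathcal{I}_{i,j}\omega\|^2_{M_j}=\|\omega\|^2_{M_i}+\|d\zeta\|^2_{M_{i,j}}$. You must fix one convention and recheck the energy identity under it.

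This is a genuine obstruction: the abstract properties you use (cochain map, adjoint or left inverse) do not suffice. Let $A$ be the complex with $A^1=\mathbb{R}$ and all other terms zero. Let $B$ be the complex $d:B^0=\mathbb{R}\to B^1=\mathbb{R}^2$ with $d(1)=e_2$ and standard inner products, and let $\mathcal{I}(1)=e_1+e_2$, an injective cochain map. The induced map $H^1(A)\to H^1(B)$ sends the generator to $[e_1]\neq0$, so the persistent group has rank $1$. But $\delta_{i,j}=d^*\mathcal{I}$ sends $1$ to $d^*(e_1+e_2)=1$, so the persistent Laplacian has trivial kernel. If instead $\mathcal{R}$ is projection followed by restriction, then $\mathcal{R}e_2=\tfrac12$, and your quotient is zero too. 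Any proof must therefore use the specific extension: $\omega$ glued to an exact form $d\zeta$ on $M_{i,j}$ fixed by the boundary value problem.

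A cleaner route shows where that input enters. For $\omega\in\mathcal{H}^k_{i,j}$ you have $d_j\mathcal{I}_{i,j}\omega=\mathcal{I}_{i,j}d_i\omega=0$ and $\delta_j\mathcal{I}_{i,j}\omega=\delta^k_{i,j}\omega=0$. So $\mathcal{I}_{i,j}$ embeds $\mathcal{H}^k_{i,j}$ injectively into $\mathcal{H}^k_n(M_j)$.
\begin{itemize}
\item By Theorem~\ref{thm.friedrichs} this gives finite dimensionality with no closed-range argument. Your closed-range argument is incomplete anyway, since the image of a closed subspace under a bounded projection need not be closed.
\item It also bounds $\dim\mathcal{H}^k_{i,j}$ by the rank of the map that $\mathcal{I}_{i,j}$ induces on relative cohomology. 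That rank is the $(m-k)$-th persistent Betti number only once you check that this map is the zero-extension map, which your Lefschetz-duality count also needs.
\item The missing reverse inequality says that the normal harmonic representative on $M_j$ of each class in the image lies in $\operatorname{im}\mathcal{I}_{i,j}$. This is exactly what fails in the example above, and it is what the construction of $\mathcal{I}_{i,j}$ must be shown to guarantee.
\end{itemize}
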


\begin{proof}
	See~\cite{chen2021evolutionary}; the corresponding discrete statement in the Eulerian representation is established in~\cite{su2024persistent}.
\end{proof}

\begin{remark}\label{rem.persistentUse}
	Theorem~\ref{thm.persistentBetti} extends the correspondence of Theorem~\ref{thm.friedrichs} to the multiscale setting: the kernel of $\Delta^k_{i,j}$ records the topological features of $M_i$ that persist into $M_j$, while its nonzero spectrum reflects the accompanying geometric variation. The manifold embedding features of Section~\ref{PML_framework} are obtained from the special case $i = j$ of Remark~\ref{rem.persistentReduction}, evaluated at each isovalue of the filtration \eqref{eq.filtration} and assembled across scales. Features drawn from the genuinely persistent case $i < j$ are not pursued in this work, but constitute a natural extension of the present framework.
\end{remark}

\subsection{Discretization of the persistent Hodge Laplacian}

Let $f:\mathbb{R}^m\to [a, b]$ be a levelset function with compact sublevel sets under consideration. By varying the isovalues of $f$, a nested sequence of sub-cell complexes under the normal boundary conditions can be obtained 
\begin{align}
	\emptyset = K_0\subset K_1\subset\cdots\subset K_{s-1}\subset K_s = K.
\end{align}
In the discrete setting, a discrete $k$-form on $K_i$ can be easily extended to $K_{i+1}$ by solving the discrete Laplace equation with certain boundary conditions on the closure $K_{i, i+1} = \operatorname{Cl}(K_{i+1}\backslash K_i)$. Denote by $I_{i, i+1}: C^k(K_i)\to C^k(K_{i+1})$ this extension. Then the extension mapping from $C^k(K_i)$ to $C^k(K_j)$ is given by $I_{i, j} = I_{j-1, j}\circ\cdots\circ I_{i, i+1}$. Alternatively, this extension $\mathcal{I}_{i,j}$ can be constructed directly by solving the Laplace equation on the closure $K_{i, j} = \operatorname{Cl}(K_j\backslash K_i)$, and we adopt this approach in the discretization below for simplicity. Through this extension, the space of discrete forms on $K_i$ can be identified with a subspace of the space of discrete forms on $K_{j}$. The discrete counterparts of the commutative diagram \eqref{eq:comDiagm1} and~\eqref{eq:comDiagm2} can then be obtained. 

Denote by $D_i^k$ and $\bar\delta_i^k$ the discrete differential and codifferential operator on $K_i$, respectively. Let $I^k_{i,j}$ and $R^k_{i,j}$ be the discrete operators on discrete $k$-forms corresponding to the extension mapping $\mathcal{I}_{i,j}$ and the projection $\mathcal{R}_{i,j}$, and let $D_{i,j}^{k-1}=R^k_{i,j}D_j^{k-1}$ and $\bar\delta_{i,j}^k = \bar\delta_{j}^kI^k_{i,j}$ the discrete operators corresponding to $d_{i,j}^{k-1}$ and $\delta_{i,j}^k$ in the previous section. 
As aforementioned, given a discrete $k$-form $W$ on $K_i$, the discrete extension $\mathcal{I}_{i,j}$ can be constructed as follows: let $\tilde L^{k-1}_{K_{i,j}}$ be the Laplace operator acting on $(k\!-\!1)$-forms on $K_{i,j}\backslash\partial K_i$ and let $\delta^k_{\partial K_i}$ denote the boundary codifferential operator that uses values of $W$ on $\partial K_i$ to evaluate the neighboring $(k\!-\!1)$-cells in $K_{i,j}\backslash {\partial K_i}$. Then solving the linear system $\tilde L^{k-1}_{K_{i,j}}\tilde\zeta = -S^{k-1}_{K_{i,j}}\delta^k_{\partial K_i}W$ gives rise to the required $d\tilde\zeta$ on $K_{i,j}\backslash\partial K_i$. Together with $W$ on $K_i$, this defines an extension in $C^k(K_j)$. It follows that the extension operator is given by:
\[I^k_{i,j}=
\begin{pmatrix}
\operatorname{Id}^k_{K_i}\\
-D^{k-1}_{K_{i,j}}(\tilde{L}^{k-1}_{K_{i,j}})^{-1} S^{k-1}_{K_{i,j}} \delta^k_{\partial K_i}
\end{pmatrix},
\]
where $\operatorname{Id}^k_{K_i}$ is the identity matrix on $K_i$ up to a rescaling on the boundary.

By definition, $\bar\delta_{i,j}^k = \bar\delta_{j}^kI^k_{i,j}$ and $\bar\delta_{j}^k = (S^{k-1}_j)^{-1}(D^{k-1}_j)^TS^k_j$. The operator $D_{i,j}^{k-1}$ is then of the form
\begin{align}
    D_{i,j}^{k-1} = (S_i^k)^{-1}(I^k_{i,j})^TS^k_jD^{k-1}_j,
\end{align}
following from its adjointness with $\bar\delta_{i,j}^k$. The projection $R^k_{i,j} = (S_i^k)^{-1}(I^k_{i,j})^TS^k_j$ can be viewed as the $L^2$-projection onto the subspace of forms on $K_j$ formed by all extensions from $K_i$, represented as forms on $K_i$. 

The discrete $k$-th persistent Hodge Laplacian on $k$-forms on $K_i$ can then be assembled as 
\begin{align}
	L_{i,j}^k = D_{i,j}^{k-1}\bar\delta_{i,j}^k + \bar\delta_{i}^{k+1}D_i^k,
\end{align}
and the discrete $k$-th persistent BIG Laplacian is
\begin{align}
	L_{i,j}^{B,k} = D_{i,j}^{k-1}(D_{i,j}^{k-1})^T + (D_i^k)^TD_i^k.
\end{align}
The discrete persistent BIG Laplacian $L_{i,j}^{B,k}$ has the same algebraic structure as the discrete persistent Hodge Laplacian $L_{i,j}^k$ but avoids the Hodge star matrices, which simplifies the computations.

\section{Persistent Manifold Learning Framework}
\label{PML_framework}

This section presents the proposed framework for BA prediction in metalloprotein-ligand and protein-protein complexes, organized as in Figure~\ref{fig:workflow}. Starting from a PDB structure in the SKEMPI v2~\cite{jankauskaite2019skempi} or metalloprotein-ligand~\cite{jiang2023metalprognet} dataset, element-specific atomic point clouds are extracted (panel a) and converted into a family of manifolds, generated as level sets of an atomic density function and discretized on a Cartesian grid into cell complexes (panel b). BIG Laplacians are then assembled on these complexes, and their spectra-harmonic (Betti numbers) and nonharmonic-provide the manifold embedding features that encode the topology and geometry of the binding interface across the filtration (panel c). In parallel, sequence-level embeddings from ESM-2~\cite{lin2023evolutionary,rives2021biological} for proteins and ChemBERTa~\cite{chithrananda2020chemberta} for ligands supply complementary information (panel d). The two descriptor sets are concatenated and used to train a gradient boosting decision tree for BA prediction (panel e).

\begin{figure}[!htbp]
    \centering
    \includegraphics[width=1\linewidth]{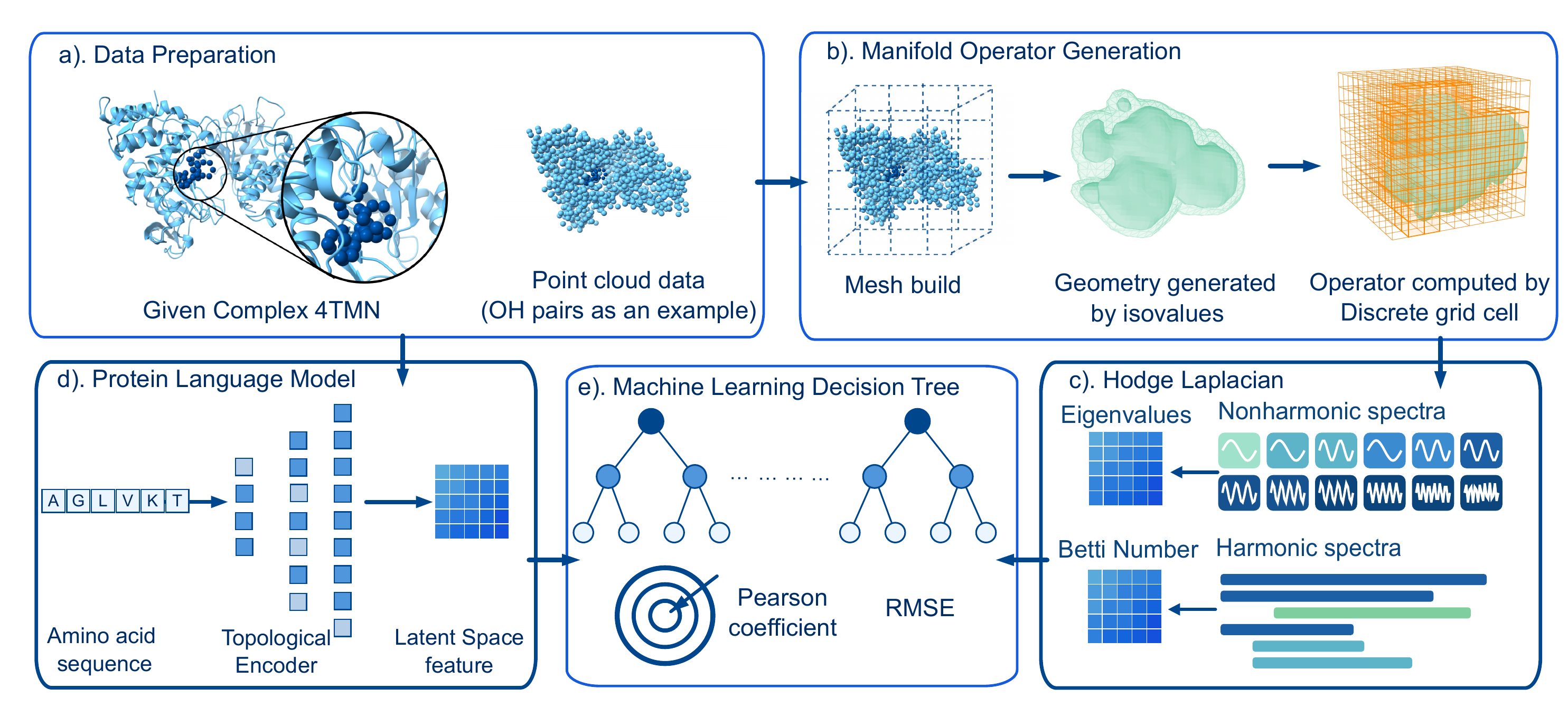}
    \caption{Overview of the proposed PML framework for metalloprotein-ligand and protein-protein BA prediction. The workflow includes manifold generation from protein structures in the SKEMPI v2 and metalloprotein--ligand datasets, geometry-based algebraic topology feature extraction using multiscale Hodge Laplacian descriptors, and machine learning model training and testing for BA prediction.}
    \label{fig:workflow}
\end{figure}
\subsection{Data Source and Pre-processing}

We evaluate the proposed framework on two benchmark datasets, one for each interaction type. For PPI binding affinity (BA) prediction, we adopt the wild-type subset of SKEMPI v2~\cite{jankauskaite2019skempi}, referred to as SKEMPI-WT. Wild-type complexes, rather than their mutant counterparts, are chosen because mutations typically induce only minor local geometric perturbations, making wild-type structures a more suitable and fundamental testbed for evaluating geometry-based topological representations. Since PDB structures in this dataset frequently contain missing atoms or residues, we first repair each structure using Profix from the Jackal package~\cite{xiang2001extending}. After this preprocessing step, the resulting SKEMPI-WT dataset comprises 343 protein-protein complexes.

For metalloprotein-ligand BA prediction, we adopt the metalloprotein-ligand dataset from~\cite{jiang2023metalprognet}, which provides protein-ligand complexes with experimentally measured binding affinities across a range of metal-ion-containing binding environments. In contrast to the PPI structures, these complexes are structurally complete and require no repair prior to use. We follow the original training-test partition from~\cite{wang2025join}, comprising 1,845 training complexes and 618 test complexes, for a total of 2,463 complexes. Model selection and hyperparameter tuning are carried out via 10-fold cross-validation on the training set, and final performance is reported on the independent test set.

\subsection{Element-Specific Manifold Construction}\label{Manifold_construction}

Each complex, whether a protein-protein complex or a metalloprotein-ligand complex, is represented by a set of point cloud data. To convert this discrete representation into the continuous manifolds required for our framework, we adopt the flexibility-rigidity index (FRI)~\cite{nguyen2019dg} as a discrete-to-continuum mapping. Following the element-specific approach of~\cite{cang2017topologynet}, we group atoms by element type and consider pairwise interactions between atom types on the two sides of the interface separately, rather than treating all atoms uniformly. For metalloprotein-ligand complexes, the protein side consists of Hydrogen (H), Carbon (C), Nitrogen (N), Oxygen (O), and Sulfur (S), while the ligand side additionally includes Phosphorus (P), Fluorine (F), Chlorine (Cl), Bromine (Br), and Iodine (I), reflecting the broader chemical diversity of small-molecule ligands; this yields 50 possible atom-type pairs. Since most proteins contain negligible Hydrogen content in the crystal structures used here, we exclude H from the protein side, reducing the number of atom pairs used in practice to 40, formed between protein atom types\(\{\text{C}, \text{N}, \text{O}, \text{S}\}\)
and ligand atom types ~\(\{\text{H}, \text{C}, \text{N}, \text{O}, \text{S}, \text{P}, \text{F}, \text{Cl}, \text{Br}, \text{I}\}\). In contrast, both sides of a protein-protein interface are ordinary protein chains, lacking the heteroatoms characteristic of small-molecule ligands; we therefore restrict both sides to C, N, and O, again excluding Sulfur and Hydrogen for the same reasons as above, yielding 
9 atom-type pairs between \(\{\text{C}, \text{N}, \text{O},\text{S}\}\) and \(\{\text{C}, \text{N}, \text{O},\text{S}\}\). For each complex, we restrict our attention to atoms within a cutoff distance of \(12\mathring{\text{A}}\) from the interface and use the resulting atom pairs to construct manifolds for each atom-pair type.

For a given atom-pair type, let \(\{\mathbf{x}^{\alpha}_i\}_{i=1}^{s}\)
 denote the positions of the \(s\) atoms of that type within the cutoff distance, where \(\alpha\) indicates which side of the interface the atom belongs to. We then define a level-set function on \(\mathbb{R}^3\) as the negative sum of Gaussian densities centered at these atomic positions:
\begin{align}\label{eq.lvf.rho}
\rho(\mathbf{x}, \tau) = -\sum^{s}_{i = 1}\exp\left(-\left(\frac{||\mathbf{x} - \mathbf{x^{\alpha}_i}||}{\tau r^{\alpha}_i}\right)^2\right),
\end{align}
where \(||\mathbf{x} - \mathbf{x^{\alpha}_i}||\)
is the Euclidean distance from \(\mathbf{x}\) to the \(i\)-th atom \(\mathbf{x^{\alpha}_i}\); \(r^{\alpha}_i\)
is its van der Waals radius, and \(\tau\) is a scale parameter. For a given isovalue \(c\), the sublevel set
\begin{align}
M = \{\mathbf{x}\,|\, \rho(\mathbf{x}, \tau)\leq c \}
\end{align}
defines a compact manifold in \(\mathbb{R}^3\), with boundary given by the isosurface \(\partial M = \{\mathbf{x} \, | \, \rho(\mathbf{x}, \tau) = c \}\). Choosing a sequence of increasing isovalues \(c_1 < c_2 < \cdots < c_s\) yields a nested filtration of manifolds,
\begin{align}
M_1\subset M_2\subset \cdots \subset M_s,
\end{align}
where \(M_i\) is the sublevel-set manifold associated with isovalue \(c_i\). In our PML experiments, we use \(s=9\) evenly spaced isovalues in the interval \([-0.5, -0.001]\)
to generate this filtration for each atom pair; this range is chosen because isovalues above \(-0.001\) produce almost no change in the topology of the resulting manifolds for most atom pairs, while isovalues below \(-0.5\) require finer grids to resolve the corresponding isosurfaces, substantially increasing computational costs. Figure~\ref{fig:4tmn_oh} illustrates this filtration at three representative isovalues for the O-H atom pair in the complex 4TMN as an example. We note that Eq.~\eqref{eq.lvf.rho} is one particular choice of FRI density function, which has been shown to convert discrete point-cloud data into continuous embeddings in a numerically stable manner, and has previously been used to construct protein boundary surfaces~\cite{chen2021evolutionary} and interactive manifolds~\cite{nguyen2019dg}; other reasonable choices of FRI density functions could equally be used to generate the filtration.

\begin{figure}[htbp]
\centering
\includegraphics[width=1\linewidth]{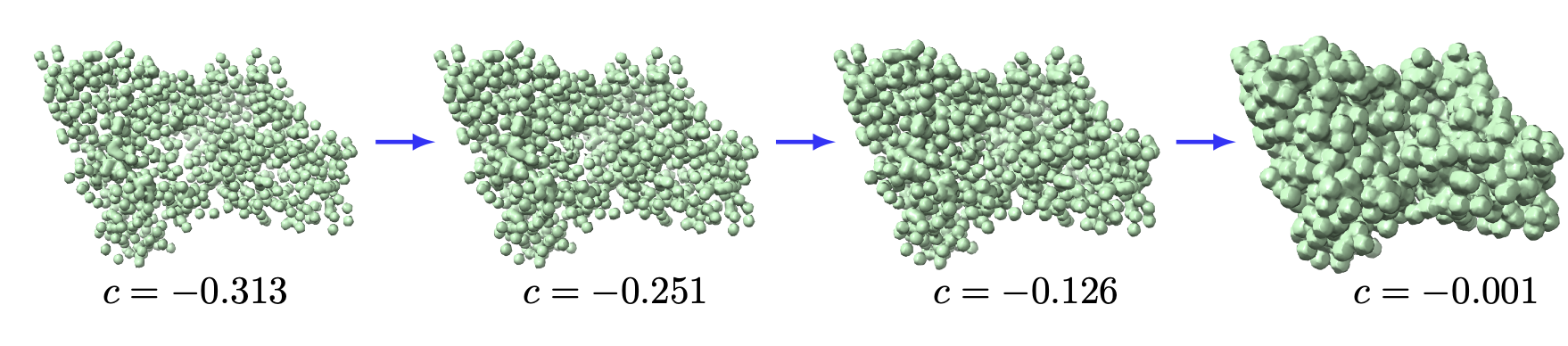}
\caption{
Homotopic evolution of the Gaussian isosurface manifold generated from
protein oxygen atoms and ligand hydrogen atoms in the 4TMN protein--ligand
complex. The manifold is constructed from a Gaussian density field over
OH atom pairs and evolves as the isovalue decreases
}

\label{fig:4tmn_oh}

\end{figure}

\subsection{Grid Construction and Cell Complex}

To compute the manifolds described in Section~\ref{Manifold_construction} numerically, we embed each atom pair in a regular Cartesian grid. For a given atom pair, a bounding box is first constructed by taking the minimum and maximum coordinates over all atoms of that type within the cutoff distance along each axis, and then expanded by 25\% of its range on each side of every axis, so that the resulting manifold does not touch the boundary of the grid. This bounding box is then discretized into a uniform $N\times N\times N$ lattice of grid vertices ($N=100$ in our experiments), with equally spaced coordinates along each axis. We evaluate the level-set function $\rho(\mathbf{x},\tau)$ from Eq.~(2.1) at every grid vertex, producing a discretized approximation of the density field used to construct the manifolds in the filtration.

This vertex lattice induces a structured cell complex consisting of $N^3$ vertices, $3(N-1)N^2$ edges, $3(N-1)^2N$ faces, and $(N-1)^3$ volumetric cells, providing the combinatorial structure required by the discrete exterior calculus framework introduced in Section~\ref{Methods}. For a given isovalue $c_i$, we include a grid cell in the truncated complex representing $M_i$ if at least one of its vertices lies in the corresponding sublevel set defined in Section~\ref{Manifold_construction}; cells with no such vertex are discarded. This yields, for each isovalue in the filtration, a truncated cell complex conforming to the corresponding manifold $M_i$, which provides the discrete geometric structure needed to compute the manifold embedding features introduced as follows.

\subsection{Features Generation}\label{Feature_Generation}
Manifold embedding features characterize a binding interface by representing the atoms involved as samples from an underlying manifold, and then extracting spectral and topological information from BIG Laplacians defined on that manifold, denoted \(L_{3,n}\)
and defined in detail in Section~\ref{laplacian_construction}. This representation applies broadly to different types of binding interfaces, including metalloprotein-ligand binding sites and protein-protein interfaces, though the specific construction of the manifold depends on the system under study. For metalloprotein-ligand complexes, we derive these manifolds for all complexes from atomic density fields defined around each pair of interacting atom types and focus on the 0-th order BIG Laplacian, whose zero eigenvalues give the 0-th Betti number \(\beta_0\) and whose nonzero eigenvalues capture the underlying geometric structure of the binding interface; higher-order Laplacians could add further features, but we restrict to \(L_{3,n}\) here for computational efficiency.

For each atom pair, we use a fixed Cartesian grid that remains the same across all complexes, which keeps spectra comparable while avoiding the high computational cost of a single grid that is large and fine enough to cover all complexes. On each manifold, we compute \(L_{3,n}\)
under the normal boundary condition and extract \(\beta_0\) together with the first 2 non-zero eigenvalues as the manifold embedding feature, respectively, giving \(6\) features per manifold in total. Across 9 isovalues and 40 atom-pair types, this yields \(3\times9\times40\) manifold embedding features per protein-ligand complex.
All manifold constructions and Laplacian computations are performed on the High Performance Computing Center, using B200 GPU nodes to accelerate the large-scale grid and eigenvalue computations required across all atom pairs and complexes. This feature set is sufficient to validate the proposed framework for predicting binding affinities, as shown in Section~\ref{results}.

\subsection{Protein Language Models}
Beyond geometry and topology, sequence-derived information offers an additional, complementary view of a binding complex. To capture this, embeddings from pretrained protein and molecular language models are incorporated alongside the manifold embedding features. Two such models are used in this work: ESM-2~\cite{lin2023evolutionary,rives2021biological}, a transformer pretrained on protein sequences, and ChemBERTa~\cite{chithrananda2020chemberta}, a transformer pretrained on molecular SMILES strings. Which of the two is applied depends on whether a given partner in the complex is a protein or a small-molecule ligand.

In the PPI setting, both partners are proteins, so both are encoded with ESM-2, a 650-million-parameter, 33-layer model trained on roughly 250 million sequences under a masked-token objective. Each protein chain is passed through the model independently, and its residue-level embeddings from the final layer are averaged over all residues to obtain a single 1280-dimensional vector. Concatenating the vectors for the two partners gives a 2560-dimensional sequence embedding for each PPI complex.
The MPLI setting instead pairs a protein with a small-molecule ligand, so a second model is needed to handle the ligand side. The protein component is embedded with the same ESM-2 model described above, again yielding a 1280-dimensional vector. The ligand, represented as a SMILES string, is embedded separately using ChemBERTa-77M-MLM, a RoBERTa-style model pretrained on 77 million SMILES strings via masked language modeling; its final hidden states are averaged across tokens to produce a 384-dimensional vector. The protein and ligand embeddings are concatenated into a single 1664-dimensional sequence embedding for each MPLI complex.
Since ESM-2 has a fixed maximum input length, protein sequences longer than this limit are split into subsequences, each embedded separately before the resulting per-residue embeddings are pooled together so that sequence embeddings remain available even for long protein chains.

\subsection{Machine Learning Platform}
Binding affinity prediction is carried out using the Gradient Boosting Decision Tree (GBDT) algorithm, implemented via the Python library. GBDT is a widely used ensemble method known for its resistance to overfitting, stable performance across a range of hyperparameter choices, and straightforward implementation. Rather than relying on a single predictor, the method builds a sequence of shallow decision trees, each one trained to correct the errors left by the trees before it; while any individual tree is a weak predictor on its own, aggregating many such trees yields a much stronger overall model. The manifold embedding features and protein language model embeddings introduced in Section~2 serve as the input to this GBDT model for BA regression. Table~\ref{tab:gbdt-hyperparams} summarizes the hyperparameter configuration used throughout. Because GBDT training involves stochastic subsampling, results can vary across runs; to obtain a stable estimate of performance, each model is retrained from 10 different random seeds (42, 1234, 5678, 91011, 121314, 151617, 181920, 212223, 242526, and 272829), and the final reported metrics are averaged across these 10 independent runs.
\begin{table}[!htbp]
\centering
\caption{Hyperparameters used to build the gradient boosting regression models.}
\label{tab:gbdt-hyperparams}
\begin{tabular}{ccc}
\toprule
Number of Estimators & Max depth & Minimum sample split \\
\midrule
10000 & 5 & 5 \\
\midrule
Loss function & Subsample & Max features \\
\midrule
Squared error & 0.5 & Square root \\
\bottomrule
\end{tabular}
\end{table}

\section{Results}\label{results}

To quantitatively evaluate the performance of our binding affinity prediction models, we employ Pearson's correlation coefficient (PCC), and root mean squared error (RMSE). PCC measures the strength and direction of the linear relationship between predicted and experimental binding affinities, making it well suited for assessing how well a model preserves the relative ranking of binding affinities across complexes; RMSE, in contrast, penalizes large errors more heavily due to the squared term. Reporting both types of metrics together thus provides a more complete assessment of model performance than either alone, capturing correlation and absolute accuracy simultaneously.
PCC is defined as
\begin{align}
\text{PCC}(\mathbf{y}, \mathbf{y}^p) = \frac{\sum_{i=1}^{n}(y_i - \bar{y})(y_i^p - \bar{y}^p)}{\sqrt{\sum_{i=1}^{n}(y_i - \bar{y})^2}\sqrt{\sum_{i=1}^{n}(y_i^p - \bar{y}^p)^2}},
\end{align}
where \(y_i\) and \(y_i^p\) denote the experimental and predicted binding affinity values for the \(i\)-th sample, respectively, and \(\bar{y}\)
and \(\bar{y}^p\) are their corresponding mean values.
RMSE are computed as
\begin{align}
\text{RMSE} = \sqrt{\frac{1}{n}\sum_{i=1}^{n}(y_i - y_i^p)^2},
\end{align}
where \(n\) is the total number of samples.
We employ these metrics to assess the performance of our machine learning models on both datasets. To ensure an unbiased evaluation, PCC and RMSE are computed via 10-fold cross-validation rather than on a single train-test split, so that the reported performance reflects the model's ability to generalize across different subsets of the data rather than its fit to one particular partition.
\subsection{Performance of Metallprotein-ligand binding affinity predictions} 
\subsubsection{Comparison with Existing Methods}
We first evaluate the proposed PML framework on the metalloprotein-ligand dataset from~\cite{jiang2023metalprognet} and compare its performance against four previously published models: RosENet~\cite{hassanharrirou2020rosenet}, NNScore2.0~\cite{durrant2011nnscore}, MetalProGNet~\cite{jiang2023metalprognet}, and JPH-GBT~\cite{wang2025jphgbt}, the current state of the art for this task. Since PML is trained independently across 10 random seeds, we report both the average performance across all 10 runs and the performance of the single best-performing run, allowing direct comparison with prior work while also characterizing the stability of the model across different initializations.

As shown in Figure~\ref{fig:mpli-performance}(a) and Table~\ref{tab:mpli-comparison}, PML outperforms all baseline methods in terms of the PCC. Averaged over the 10 runs, PML achieves PCC $= 0.753 \pm 0.002$, with RMSE of $1.183\pm0.003$ kcal/mol. In its best-performing run (seed 121314), PML achieves PCC $= 0.756$, RMSE $= 1.176$ kcal/mol, surpassing the previously reported PCC of $0.745$ from CAML~\cite{feng2025caml}, the prior state of the art, and establishing a new state of the art for metalloprotein-ligand BA prediction. Figure~\ref{fig:mpli-performance}(b) shows the correlation between experimental and predicted BA for this best-performing run: the majority of points cluster closely around the diagonal, but predictions become noticeably more scattered and tend to underestimate the true affinity for the strongest-binding complexes (experimental BA below roughly $-10$ kcal/mol), a common pattern in binding affinity prediction that likely reflects the relative scarcity of high-affinity complexes in the training data.

\begin{figure}[!htbp]
    \centering
    \includegraphics[width=0.8\linewidth]{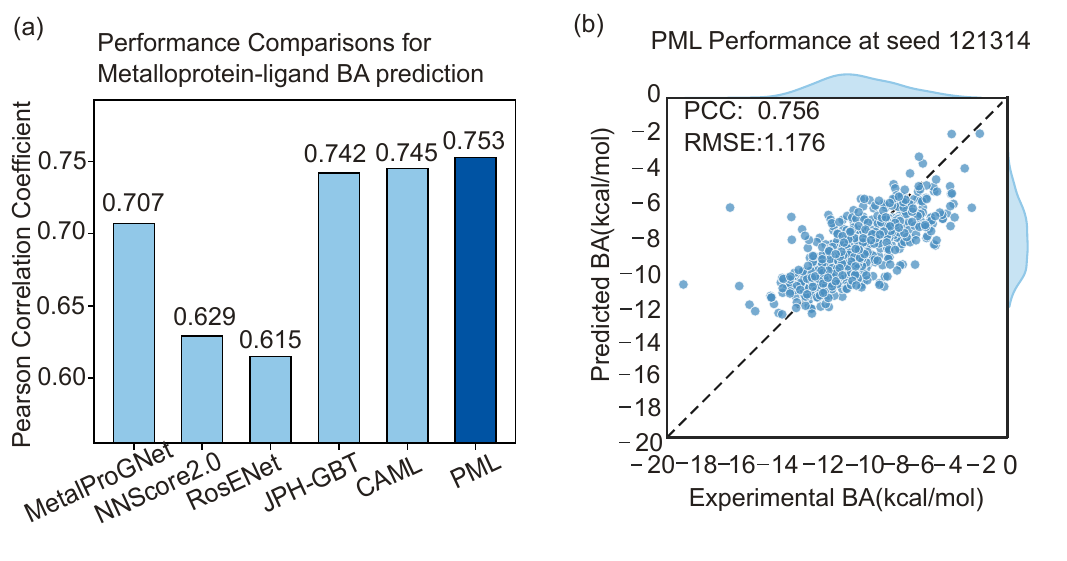}
    \caption{ (a) Performance comparison of PML with previously published models for metalloprotein-ligand binding affinity prediction, evaluated using the Pearson correlation coefficient. (b) Correlation between experimental and predicted binding affinities from PML's best-performing run (seed 121314), with PCC \(0.756\) kcal/mol and RMSE \(1.176\) kcal/mol.}
    \label{fig:mpli-performance}
\end{figure}
\begin{table}[!htbp]
\centering
\caption{Comparison of PML with previously published machine learning models on the metalloprotein-ligand binding affinity dataset~\cite{jiang2023metalprognet}. Baseline results for RosENet, NNScore2.0, and MetalProGNet are taken from~\cite{jiang2023metalprognet}, results for JPH-GBT are taken from~\cite{wang2025jphgbt}, and results for CAML(ES) are taken from~\cite{feng2025caml}. PML (average) reports the mean $\pm$ standard deviation over 10 independent runs with different random seeds, while PML (best run) reports the single best-performing run. RMSE values are computed on the raw $pK_d$ labels used as binding affinities.}
\label{tab:mpli-comparison}
\begin{tabular}{lcc}
\hline
Model & PCC & RMSE (kcal/mol) \\
\hline
RosENet~\cite{hassanharrirou2020rosenet} & $0.615\pm0.017$ & $1.436\pm0.011$ \\
NNScore2.0~\cite{durrant2011nnscore} & $0.629\pm0.002$ & $1.391\pm0.004$ \\
MetalProGNet~\cite{jiang2023metalprognet} & $0.703\pm0.010$ & $1.285\pm0.020$ \\
JPH-GBT~\cite{wang2025jphgbt} & $0.742\pm0.001$ & $1.205\pm0.001$ \\
CAML~\cite{feng2025caml} & $0.745\pm0.001$ & $1.202\pm0.002$ \\
PML (average over 10 runs) & $0.753\pm0.002$ & $1.183\pm0.003$ \\
PML (best run) & ${0.756}$ & ${1.176}$ \\
\hline
\end{tabular}

\end{table}

\subsubsection{Cross-Validation Performance}

To assess the sensitivity of PML on the training data, we further perform 10-fold cross-validation on the metalloprotein-ligand training set, again averaging over 10 independent random seeds to account for variability introduced by the stochastic subsampling in GBDT. Across all folds and seeds, PML achieves an average PCC of $0.759\pm0.003$, with an RMSE of $1.202\pm0.007$ kcal/mol. The best-performing run achieves a PCC of $0.761$ with an RMSE of $1.198$ kcal/mol, consistent with the held-out test set performance reported in the previous section and indicating that PML generalizes stably across different partitions of the training data.

To evaluate the contribution of the manifold embedding and protein language model features independent of the choice of the downstream regressor, we additionally train a support vector machine (SVM) as a basic model in \cite{sanchezgarcia2022ppiaffinity} on the same PML feature set and compare its performance against GBDT. As shown in Figure~\ref{fig:svm-gbdt-comparison}, GBDT substantially outperforms SVM on the same input features, achieving a PCC of $0.761$ and RMSE of $1.198$ kcal/mol, compared to a PCC of $0.689$ and RMSE of $1.340$ kcal/mol for SVM. This performance gap suggests that the relationship between the PML feature set and binding affinity is highly nonlinear, and that GBDT's ability to model complex, nonlinear feature interactions through an ensemble of decision trees is better suited to this task than the SVM's linear or kernel-based decision boundary. These results support the use of GBDT as the primary regressor throughout this work.

\begin{figure}[!htbp]
	\centering
	 \includegraphics[width=0.8\linewidth]{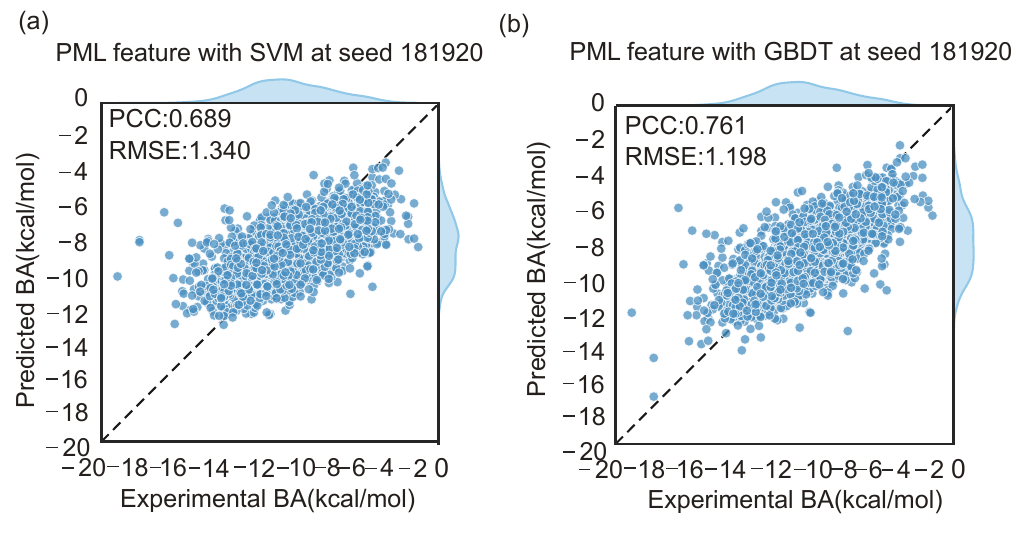}
	\caption{Comparison between SVM and GBDT regressors trained on the same PML feature set, for the metalloprotein-ligand binding affinity dataset (seed 181920). (a) Predicted versus experimental binding affinities using an SVM regressor. (b) Predicted versus experimental binding affinities using a GBDT regressor. Both panels show representative results from a single run.}
	\label{fig:svm-gbdt-comparison}
\end{figure}

\subsection{Performance of Protein-Protein Binding Affinity Predictions}

We next evaluate the proposed PML framework on the SKEMPI-WT dataset compare its performance against PLNet~\cite{xu2024pldtree}, a persistent-Laplacian-based neural network representing the current state of the art for protein-protein binding affinity prediction. To assess the contribution of the manifold embedding and protein language model features independently of the downstream regressor, we additionally evaluate a support vector machine trained on the same PML feature set, denoted PML-SVM, which mimics the modeling approach used in PPI-Affinity~\cite{sanchezgarcia2022ppiaffinity}, a widely used SVM-based baseline for protein-protein BA prediction. As with the metalloprotein-ligand task, PML with GBDT is trained across 10 independent random seeds, and we report both the average performance and the single best-performing run.

Table~\ref{tab:ppi-comparison} summarizes the performance of PML, PML-SVM, and PLNet on the SKEMPI-WT dataset. PML with GBDT consistently achieves the best performance, both on average across the 10 random seeds and in its best-performing run (seed 212223). Comparing PML against PML-SVM, which uses the same manifold embedding and protein language model features but a SVM as the downstream regressor, isolates the effect of the regressor while holding the feature set fixed: GBDT clearly outperforms SVM on identical input features, echoing the trend observed for the metalloprotein-ligand task and further supporting its suitability for modeling the nonlinear relationship between these features and binding affinity. More notably, PML also outperforms PLNet, a method built directly on persistent Laplacian and persistent homology descriptors of the binding interface. This suggests that the manifold embedding features introduced in this work, which combine geometric and topological information through the BIG Laplacian together with sequence-level protein language model embeddings, capture binding-relevant information beyond what persistent Laplacian and persistent homology features alone can provide, particularly for the flatter, less pocket-like interfaces characteristic of protein-protein complexes.

\begin{table}[t]
\centering
\caption{Comparison of PML with PLNet and with PML-SVM, an SVM-based ablation trained on the same PML feature set, on the SKEMPI-WT protein-protein binding affinity dataset. PML (average) and PML-SVM (average) report the mean $\pm$ standard deviation over 10 independent runs with different random seeds; PML (best run) and PML-SVM (best run) report the single best-performing run (seed 212223) for each model.}
\label{tab:ppi-comparison}
\begin{tabular}{lcc}
\hline
Model & PCC & RMSE (kcal/mol) \\
\hline
PLNet~\cite{xu2024pldtree} & $0.681\pm0.012$ & $1.533\pm0.021$ \\
PML-SVM (average over 10 runs) & $0.692\pm0.009$ & $2.127\pm0.025$ \\
PML-SVM (best run) & $0.705$ & $2.093$ \\
PML (average over 10 runs) & $0.713\pm0.016$ & $2.051\pm0.046$ \\
PML (best run) & ${0.731}$ & ${1.996}$ \\
\hline
\end{tabular}

\end{table}

\begin{figure}[t]
	\centering
	 \includegraphics[width=0.8\linewidth]{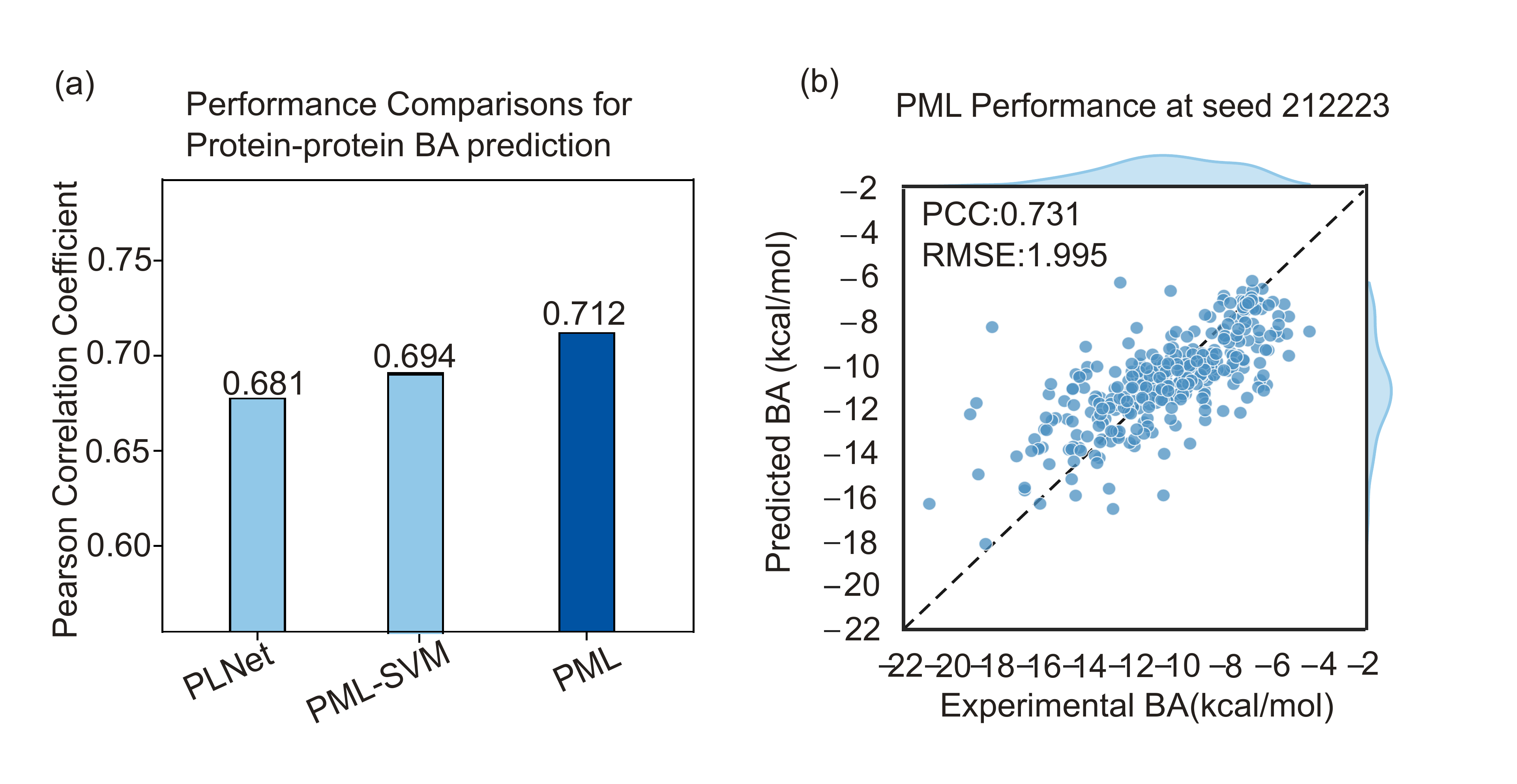}
	\caption{(a) Performance comparison of PML with PLNet and PML-SVM for protein-protein binding affinity prediction, evaluated using the Pearson correlation coefficient. (b) Correlation between experimental and predicted binding affinities from PML's best-performing run (seed 212223), with PCC $= 0.731$ and RMSE $= 1.996$ kcal/mol.}
	\label{fig:ppi-performance}
\end{figure}

\section{Discussion}\label{discussion}
We propose PML as a binding affinity prediction framework that unifies de Rham–Hodge theory, level-set filtrations, and machine learning. PML  outperforms existing methods on both metalloprotein-ligand and protein-protein benchmarks, despite the two interface types being governed by very different physics. Below we examine why a single manifold-based representation transfers across both settings, and why the spectral features are deliberately truncated.
\subsection{Element-specific manifolds and the metal coordination shell}
The element-specific decomposition is what makes a metal site legible to the manifold. Metal binding in proteins is mediated almost exclusively by nitrogen from histidine, sulfur from cysteine, and oxygen from aspartate and glutamate, held in tetrahedral or octahedral shells whose geometry is further tuned by second-shell residues~\cite{dudev2003first,shirian2021metalbinding}. These donors are exactly the protein-side element types $\{\text{N},\text{O},\text{S}\}$ retained in our scheme, and since a separate manifold is built for every atom-type pair, they are isolated rather than blurred into the surrounding bulk. The ion itself never enters the construction: it acts as a geometric constraint, organizing its donors into a rigid shell at short characteristic distances, and it is that organization the density field records. The $12\mathring{\text{A}}$ cutoff retains the second shell as well, so these pairs also carry the environment dictating the site's size, polarity, and selectivity.

The filtration reads that geometry out. Atoms held rigidly within a coordination shell overlap strongly and merge into a single connected component at the most negative isovalues, whereas loosely packed regions merge only much later. The isovalue at which $\beta_0$ drops is therefore a readout of local packing tightness, and tracking $\beta_0$ across the nine isovalues yields a multiscale signature of the binding site's organization that no single manifold contains.

\subsection{What the manifold captures at protein-protein interfaces}
Protein-protein interfaces present the opposite geometry: broad and flat rather than pocket-like, burying $1600\pm400$~\AA$^2$ of surface upon complexation~\cite{fleishman2014hotspot}. Affinity is dominated by shape complementarity and interfacial packing~\cite{kortemme2002simple}, concentrated in hot spots that are deeply buried and densely packed~\cite{deng2011dbac}, and correlates with the amount of surface buried~\cite{chen2013protein}. These are geometric quantities, and the filtration measures them: since each manifold spans atom pairs from both sides of the interface, the isovalue at which the two partners' density lobes merge reports how closely their surfaces approach, while the manifold volume reflects interfacial burial. The same descriptor that traces a coordination shell in a metalloprotein thus traces packing complementarity across a protein-protein interface, which is why one framework serves both.

Two differences are worth noting. Only $\{\text{C},\text{N},\text{O}\}$ is relevant here, so the manifold embedding contributes $6\times9\times9 = 486$ features against $2160$ for metalloprotein-ligand complexes, giving the ESM-2 embeddings proportionally greater weight. And SKEMPI-WT provides only 343 complexes, placing a premium on descriptor quality; that PML exceeds PLNet here, and that an SVM on the same features does too, indicates that the advantage lies in the representation rather than the regressor.

\subsection{Spectral truncation: why few nonzero eigenvalues}
Only $\beta_0$ and the first two nonzero eigenvalues of $L_{3,n}$ are retained per manifold. Extending to ten degrades performance on both benchmarks for two reasons. Numerically, the discrete spectrum approximates its continuum counterpart only up to a resolution-dependent cutoff: low modes vary across many grid cells and are resolved accurately, whereas high modes oscillate on a scale approaching the grid spacing $\ell$ and are contaminated by discretization error. Since the convergence of the BIG Laplacian spectrum~\cite{ribando2024graph} is not uniform, the highest modes describe the discretization as much as the manifold. Physically, higher eigenvalues correspond to finer surface detail, which the input data least reliably supports, being sensitive to coordinate uncertainty, the scale parameter $\tau$, and the single conformation of a static structure. The leading modes instead describe global shape and connectivity~\cite{fiedler1973algebraic}, which are robust to these perturbations. Spectral truncation is therefore a regularization matched to the resolution of the data, not a limitation to be lifted.

\section{Conclusion}\label{Conclusion}
As a foundational tool in differential geometry and algebraic topology, de Rham-Hodge theory studies differential forms, cohomology, and the harmonic structure of manifolds. Only recently has this framework been connected to data science, largely through evolutionary de Rham-Hodge theory~\cite{chen2021evolutionary} and its Eulerian, persistent extensions~\cite{su2024persistent}. This work builds on that direction by proposing Persistent Manifold Learning (PML), which represents a binding interface as a family of manifolds obtained through a level-set filtration of atomic density fields, and applies Boundary-Induced Graph (BIG) Laplacians to capture both the topology and the local geometric organization of the interface across scales. We pair these manifold embedding features with protein and molecular language model embeddings, which supply complementary sequence-level information, and combine both with GBDT, well suited to this setting for its accuracy and efficiency on relatively small, structured biomolecular datasets.

We validate PML on two structurally distinct benchmarks: a metalloprotein-ligand binding dataset and the wild-type subset of SKEMPI v2. Because the two interaction types differ in chemical composition and interface geometry, we introduce an element-specific atom-pairing scheme tailored to each, allowing a single framework to accommodate both compact, metal-coordinated pockets and broad, flat protein-protein interfaces. Compared against state-of-the-art methods in the literature, PML achieves the best reported performance on both benchmarks. Binding affinity prediction serves here as a case study; the framework extends in principle to other biomolecular problems in which structural and sequence information jointly determine a quantitative outcome. We believe that manifold-based, de Rham-Hodge-theoretic learning represents a promising emerging direction in machine learning for molecular and biological data.

\section*{Further Reading}

The mathematical foundations of this work lie in de Rham-Hodge theory and its discretization. The following references are intended as entry points for readers approaching the article from the life sciences side.

\medskip
\noindent For background on differential forms, de Rham cohomology, and Hodge theory, see the following:
\begin{itemize}
\item R. Bott and L. W. Tu, \emph{Differential Forms in Algebraic Topology}, Graduate Texts in Mathematics 82, Springer, 1982. A geometric introduction to differential forms and de Rham cohomology, requiring little prior topology.
\item G. Schwarz, \emph{Hodge Decomposition—A Method for Solving Boundary Value Problems}, Lecture Notes in Mathematics 1607, Springer, 1995. The standard reference for Hodge theory on manifolds \emph{with boundary}, covering the normal and tangential boundary conditions and the harmonic field decompositions used in Section~\ref{sec.deRhamHodge}.
\end{itemize}

\medskip
\noindent For details on discretization and the Laplacian-based descriptors used here, see the following:
\begin{itemize}
\item E. Ribando-Gros, R. Wang, J. Chen, Y. Tong, and G.-W. Wei, \emph{Combinatorial and Hodge Laplacians: Similarities and differences}, SIAM Rev., (2024). Introduces the Boundary-Induced Graph Laplacian used in this article.
\item J. Chen, R. Zhao, Y. Tong, and G.-W. Wei, \emph{Evolutionary de Rham--Hodge method}, Discrete Contin. Dyn. Syst. B, 26 (2021), pp. 3785--3821. Extends de Rham--Hodge theory to families of evolving manifolds, the multiscale setting adopted here.
\item Z. Su, Y. Tong, and G.-W. Wei, \emph{Persistent de Rham--Hodge Laplacians in Eulerian representation for manifold topological learning}, AIMS Math., 9 (2024), pp. 27438--27470. The Eulerian, grid-based formulation on which our implementation is built.
\end{itemize}

\section*{Declarations}

\section*{Authorship and Contributorship}

X.X. implemented the framework, performed all computations, and wrote the original draft. Z. S. contributed to the design of the method, assisted with the implementation, and revised the manuscript. G.-W. W and C. W.  conceived and supervised the project, secured funding, and revised the manuscript. All authors read and approved the final manuscript.

\section*{Conflicts of interest}  The authors declare no conflicts of interest.

\section*{Data \& Code Availability}
The code needed to reproduce this paper's results is available from the authors upon reasonable request. For detailed information on the metalloprotein-ligand complex dataset, please refer to~\cite{wang2025join}. The SKEMPI v2 dataset is available at \url{https://life.bsc.es/pid/skempi2/}.

\section*{Ethics} 
 This study did not involve human participants, animals, or sensitive data requiring
ethical approval or consent to participate.
Artificial Intelligence was used for English editing.

\section*{Acknowledgments} C. W. was partially supported  by the National Science Foundation under award DMS-2206332.
G.-W. W. was supported in part by NIH grant R35GM148196 and Georgia Research Alliance.

\bibliographystyle{plain}
\bibliography{main}
\end{document}

%% file: ex_shared.tex
\newsiamremark{remark}{Remark}
\newsiamremark{hypothesis}{Hypothesis}
\crefname{hypothesis}{Hypothesis}{Hypotheses}
\newsiamthm{claim}{Claim}
\newsiamthm{assum}{Assumption}
\newsiamthm{scheme}{Scheme}
\newsiamthm{lem}{Lemma}
\newsiamthm{thm}{Theorem}
\headers{Persistent Manifold Learning }{X.~Xu, Z.~Su, G.~Wei, C.~Wang}

\title{Persistent Manifold Learning of   Protein Properties}

\author{
Xingjian Xu\thanks{
Department of Mathematics, University of Florida, Gainesville, FL 32611, USA 
({\tt xingjianxu@ufl.edu}).}
\and
Zhe Su\thanks{
Department of Mathematics and Statistics, Auburn University, Auburn, AL 36849, USA 
({\tt zhs0011@auburn.edu}).}
\and
Guo-Wei Wei\thanks{
  Department of Mathematics,   University of Georgia, Athens, GA 30602, USA.\\
   Department of Biochemistry and Molecular Biology, University of Georgia, Athens, GA 30602, USA
({\tt guowei.wei@uga.edu  }).}
\and
Chunmei Wang\thanks{
Department of Mathematics, University of Florida, Gainesville, FL 32611, USA 
(corresponding author: {\tt chunmei.wang@ufl.edu}).}
}



%% file: main.bib
@article{liu2024algebraic,
  title={The algebraic stability for persistent Laplacians},
  author={Liu, Jian and Li, Jingyan and Wu, Jie},
  journal={Homology, Homotopy and Applications},
  volume={26},
  number={2},
  pages={297--323},
  year={2024},
  publisher={International Press of Boston}
}

@article{townsend2020representation,
  title={Representation of molecular structures with persistent homology for machine learning applications in chemistry},
  author={Townsend, Jacob and Micucci, Cassie Putman and Hymel, John H and Maroulas, Vasileios and Vogiatzis, Konstantinos D},
  journal={Nature communications},
  volume={11},
  number={1},
  pages={3230},
  year={2020},
  publisher={Nature Publishing Group UK London}
}

@article{su2025topological,
  title={Topological data analysis and topological deep learning beyond persistent homology: a review},
  author={Su, Zhe and Liu, Xiang and Hamdan, Layal Bou and Maroulas, Vasileios and Wu, Jie and Carlsson, Gunnar and Wei, Guo-Wei},
  journal={Artificial Intelligence Review},
  volume  = {59},
  pages   = {58},
  year={2025},
  publisher={Springer}
}

@article{liu2026manifold,
  title={Manifold topological deep learning for biomedical data},
  author={Liu, Xiang and Su, Zhe and Shi, Yongyi and Tong, Yiying and Wang, Ge and Wei, Guo-Wei},
  journal={Nature Communications},
   volume  = {17},
  pages   = {4710},
  year={2026},
  publisher={Nature Publishing Group UK London}
}

@article{wei2025persistent,
  title={Persistent sheaf laplacians},
  author={Wei, Xiaoqi and Wei, Guo-Wei},
  journal={Foundations of data science (Springfield, Mo.)},
  volume={7},
  number={2},
  pages={446},
  year={2025}
}

@article{lu2020protein,
  title   = {Recent Advances in the Development of {Protein--Protein Interaction} Modulators: Mechanisms and Clinical Trials},
  author  = {Haiying Lu and Qiaodan Zhou and Jun He and Zhongliang Jiang and Cheng Peng and Rongsheng Tong and Jianyou Shi},
  journal = {Signal Transduction and Targeted Therapy},
  volume  = {5},
  pages   = {213},
  year    = {2020},
  doi     = {10.1038/s41392-020-00315-3}
}

@article{du2016protein,
  title   = {Insights into {Protein--Ligand} Interactions: Mechanisms, Models, and Methods},
  author  = {Xiaoyong Du and Youhua Li and Yao-Liang Xia and Shi-Meng Ai and Jie Liang and Peng Sang and Xing-Lai Ji and Shihua Liu},
  journal = {International Journal of Molecular Sciences},
  volume  = {17},
  number  = {2},
  pages   = {144},
  year    = {2016},
  doi     = {10.3390/ijms17020144}
}

@article{zhao2022brief,
  title   = {A Brief Review of {Protein--Ligand} Interaction Prediction},
  author  = {Lingling Zhao and Hao L. Ciallella and Lauren M. Aleksunes and Hao Zhu},
  journal = {Computational and Structural Biotechnology Journal},
  volume  = {20},
  pages   = {2831--2838},
  year    = {2022},
  doi     = {10.1016/j.csbj.2022.06.004}
}

@article{nada2024protein,
  title   = {New Insights into {Protein--Protein Interaction} Modulators in Drug Discovery and Therapeutic Advance},
  author  = {Hossam Nada and Yongseok Choi and Sungdo Kim and Kwon Su Jeong and Nicholas A. Meanwell and Kyeong Lee},
  journal = {Signal Transduction and Targeted Therapy},
  volume  = {9},
  pages   = {341},
  year    = {2024},
  doi     = {10.1038/s41392-024-02036-3}
}

@article{bennett2022protein,
  title   = {{Protein--Small Molecule} Interactions in Native Mass Spectrometry},
  author  = {Jack L. Bennett and Giang T. H. Nguyen and William A. Donald},
  journal = {Chemical Reviews},
  volume  = {122},
  number  = {8},
  pages   = {7327--7385},
  year    = {2022},
  doi     = {10.1021/acs.chemrev.1c00293}
}

@article{scott2016small,
  title   = {Small Molecules, Big Targets: Drug Discovery Faces the {Protein--Protein} Interaction Challenge},
  author  = {Duncan E. Scott and Andrew R. Bayly and Chris Abell and John Skidmore},
  journal = {Nature Reviews Drug Discovery},
  volume  = {15},
  number  = {8},
  pages   = {533--550},
  year    = {2016},
  doi     = {10.1038/nrd.2016.29}
}

@article{gilson1997statistical,
  title   = {The Statistical-Thermodynamic Basis for Computation of Binding Affinities: A Critical Review},
  author  = {Michael K. Gilson and James A. Given and Bruce L. Bush and J. Andrew McCammon},
  journal = {Biophysical Journal},
  volume  = {72},
  number  = {3},
  pages   = {1047--1069},
  year    = {1997},
  doi     = {10.1016/S0006-3495(97)78756-3}
}

@article{kitchen2004docking,
  title   = {Docking and Scoring in Virtual Screening for Drug Discovery: Methods and Applications},
  author  = {Douglas B. Kitchen and H{\'e}l{\`e}ne Decornez and John R. Furr and J{\"u}rgen Bajorath},
  journal = {Nature Reviews Drug Discovery},
  volume  = {3},
  number  = {11},
  pages   = {935--949},
  year    = {2004},
  doi     = {10.1038/nrd1549}
}

@article{vangone2015contacts,
  title   = {Contacts-Based Prediction of Binding Affinity in {Protein--Protein} Complexes},
  author  = {Anna Vangone and Alexandre M. J. J. Bonvin},
  journal = {eLife},
  volume  = {4},
  pages   = {e07454},
  year    = {2015},
  doi     = {10.7554/eLife.07454}
}

@article{erijman2014structure,
  title   = {How Structure Defines Affinity in {Protein--Protein} Interactions},
  author  = {Ariel Erijman and Elad Rosenthal and Julia M. Shifman},
  journal = {PLOS ONE},
  volume  = {9},
  number  = {10},
  pages   = {e110085},
  year    = {2014},
  doi     = {10.1371/journal.pone.0110085}
}

@article{feinberg2018potentialnet,
  title={PotentialNet for molecular property prediction},
  author={Feinberg, Evan N. and Sur, Debnil and Wu, Zhenqin and Husic, Brooke E. and Mai, Huanghao and Li, Yang and Sun, Saisai and Yang, Jianyi and Ramsundar, Bharath and Pande, Vijay S.},
  journal={ACS Central Science},
  volume={4},
  number={11},
  pages={1520--1530},
  year={2018},
  doi={10.1021/acscentsci.8b00507}
}

@article{nguyen2021graphdta,
  title={GraphDTA: Predicting drug--target binding affinity with graph neural networks},
  author={Nguyen, Thin and Le, Hang and Quinn, Thomas P. and Nguyen, Tri and Le, Truyen and Venkatesh, Svetha},
  journal={Bioinformatics},
  volume={37},
  number={8},
  pages={1140--1147},
  year={2021},
  doi={10.1093/bioinformatics/btaa921}
}

@article{li2020monn,
  title={MONN: A multi-objective neural network for predicting compound--protein interactions and affinities},
  author={Li, Shuya and Wan, Fangping and Shu, Hantao and Jiang, Tao and Zhao, Dan and Zeng, Jianyang},
  journal={Cell Systems},
  volume={10},
  number={4},
  pages={308--322.e11},
  year={2020},
  doi={10.1016/j.cels.2020.03.002}
}

@article{jiang2021interactiongraphnet,
  title={InteractionGraphNet: A novel and efficient deep graph representation learning framework for accurate protein--ligand interaction predictions},
  author={Jiang, Dejun and Hsieh, Chang-Yu and Wu, Zhenxing and Kang, Yu and Wang, Jike and Wang, Ercheng and Liao, Ben and Shen, Chao and Xu, Lei and Wu, Jian and Cao, Dejun and Hou, Tingjun},
  journal={Journal of Medicinal Chemistry},
  year={2021},
  doi={10.1021/acs.jmedchem.1c01830}
}

@article{xia2014persistent,
  title={Persistent homology analysis of protein structure, flexibility, and folding},
  author={Xia, Kelin and Wei, Guo-Wei},
  journal={International Journal for Numerical Methods in Biomedical Engineering},
  volume={30},
  number={8},
  pages={814--844},
  year={2014},
  doi={10.1002/cnm.2655}
}

@article{cang2017topologynet,
  title={TopologyNet: Topology based deep convolutional and multi-task neural networks for biomolecular property predictions},
  author={Cang, Zixuan and Wei, Guo-Wei},
  journal={PLOS Computational Biology},
  volume={13},
  number={7},
  pages={e1005690},
  year={2017},
  doi={10.1371/journal.pcbi.1005690}
}

@article{cang2018integration,
  title={Integration of element specific persistent homology and machine learning for protein--ligand binding affinity prediction},
  author={Cang, Zixuan and Wei, Guo-Wei},
  journal={International Journal for Numerical Methods in Biomedical Engineering},
  volume={34},
  number={2},
  pages={e2914},
  year={2018},
  doi={10.1002/cnm.2914}
}

@article{cang2018representability,
  title={Representability of algebraic topology for biomolecules in machine learning based scoring and virtual screening},
  author={Cang, Zixuan and Mu, Lin and Wei, Guo-Wei},
  journal={PLOS Computational Biology},
  volume={14},
  number={1},
  pages={e1005929},
  year={2018},
  doi={10.1371/journal.pcbi.1005929}
}

@article{long2025interpretable,
  title={Interpretable binding affinity prediction with persistent homology},
  author={Long, Y. and others},
  journal={PLOS Computational Biology},
  volume={21},
  number={6},
  pages={e1013216},
  year={2025},
  doi={10.1371/journal.pcbi.1013216}
}

@article{wang2020persistent,
  title={Persistent spectral graph},
  author={Wang, Rui and Nguyen, Duc Duy and Wei, Guo-Wei},
  journal={International Journal for Numerical Methods in Biomedical Engineering},
  volume={36},
  number={9},
  pages={e3376},
  year={2020},
  doi={10.1002/cnm.3376}
}

@article{memoli2022persistent,
  title={Persistent Laplacians: Properties, algorithms and implications},
  author={M{\'e}moli, Facundo and Wan, Zhengchao and Wang, Yusu},
  journal={SIAM Journal on Mathematics of Data Science},
  volume={4},
  number={2},
  pages={858--884},
  year={2022},
  doi={10.1137/21M1435471}
}

@article{wang2023persistentpath,
  title={Persistent path Laplacian},
  author={Wang, Rui and others},
  journal={Foundations of Data Science},
  volume={5},
  number={1},
  pages={26--55},
  year={2023},
  doi={10.3934/fods.2022015}
}

@article{meng2021persistent,
title={Persistent spectral-based machine learning ({PerSpect ML}) for protein--ligand binding affinity prediction},
author={Meng, Zhenyu and Xia, Kelin},
journal={Science Advances},
volume={7},
number={19},
pages={eabc5329},
year={2021},
doi={10.1126/sciadv.abc5329}
}

@article{chen2022persistent,
title={Persistent Laplacian projected Omicron BA.4 and BA.5 to become new dominating variants},
author={Chen, Jiahui and Qiu, Yuchi and Wang, Rui and Wei, Guo-Wei},
journal={Computers in Biology and Medicine},
volume={151},
pages={106262},
year={2022},
doi={10.1016/j.compbiomed.2022.106262}
}

@article{xu2024pldtree,
title={{PLNet}: Persistent Laplacian neural network for protein--protein binding free energy prediction},
author={Xu, Xingjian and Wang, Chunmei and Wei, Guo-Wei and Chen, Jiahui},
journal={Protein Science},
volume={34},
number={12},
pages={e70377},
year={2025},
doi={10.1002/pro.70377}
}

@article{moesser2022protein,
  title={Protein--ligand interaction graphs: Learning from ligand-shaped 3D interaction graphs to improve binding affinity prediction},
  author={Moesser, Marc A. and Klein, Dominik K. and Boyles, Fergus and Deane, Charlotte M. and Baxter, Alice and Morris, Garrett M.},
  journal={bioRxiv},
  year={2022},
  doi={10.1101/2022.03.04.483012}
}

@article{chen2021evolutionary,
  title={Evolutionary de {Rham}--{Hodge} method},
  author={Chen, Jiahui and Zhao, Rundong and Tong, Yiying and Wei, Guo-Wei},
  journal={Discrete and Continuous Dynamical Systems - Series B},
  volume={26},
  number={7},
  pages={3785--3821},
  year={2021},
  doi={10.3934/dcdsb.2020257}
}

@article{su2024persistent,
  title={Persistent de {Rham}--{Hodge} Laplacians in {Eulerian} representation for manifold topological learning},
  author={Su, Zhe and Tong, Yiying and Wei, Guo-Wei},
  journal={AIMS Mathematics},
  volume={9},
  number={10},
  pages={27438--27470},
  year={2024},
  doi={10.3934/math.20241333}
}

@article{su2024topology,
  title={Topology-preserving Hodge decomposition in the Eulerian representation},
  author={Su, Zhe and Tong, Yiying and Wei, Guo-Wei},
  journal={Beijing Journal of Pure and Applied Mathematics},
  volume={2},
  number={2},
  pages={619--657},
  year={2025},
  publisher={International Press of Boston}
}

@article{jiang2023metalprognet,
  title={MetalProGNet: A structure-based deep graph model for metalloprotein--ligand interaction predictions},
  author={Jiang, Dejun and Ye, Zixuan and Hsieh, Chang-Yu and Yang, Ziyi and Zhang, Xiaohui and Kang, Yu and Du, Haitao and Wu, Zhenxing and Wang, Jike and Zeng, Yi},
  journal={Chemical Science},
  volume={14},
  number={8},
  pages={2054--2069},
  year={2023},
  doi={10.1039/D2SC06576B}
}

@article{jankauskaite2019skempi,
  title={{SKEMPI} 2.0: An updated benchmark of changes in protein--protein binding energy, kinetics and thermodynamics upon mutation},
  author={Jankauskait{\.e}, Justina and Jim{\'e}nez-Garc{\'i}a, Brian and Dapkunas, Justas and Fern{\'a}ndez-Recio, Juan and Moal, Iain H.},
  journal={Bioinformatics},
  volume={35},
  number={3},
  pages={462--469},
  year={2019},
  doi={10.1093/bioinformatics/bty635}
}

@article{rives2021biological,
  title={Biological structure and function emerge from scaling unsupervised learning to 250 million protein sequences},
  author={Rives, Alexander and Meier, Joshua and Sercu, Tom and Goyal, Siddharth and Lin, Zeming and Liu, Jason and Guo, Demi and Ott, Myle and Zitnick, C. Lawrence and Ma, Jerry and Fergus, Rob},
  journal={Proceedings of the National Academy of Sciences},
  volume={118},
  number={15},
  pages={e2016239118},
  year={2021}
}

@article{lin2023evolutionary,
  title={Evolutionary-scale prediction of atomic-level protein structure with a language model},
  author={Lin, Zeming and Akin, Halil and Rao, Roshan and Hie, Brian and Zhu, Zhongkai and Lu, Wenting and dos Santos Costa, Allan and Fazel-Zarandi, Maryam and Sercu, Tom and Candido, Salvatore and others},
  journal={Science},
  volume={379},
  number={6637},
  pages={1123--1130},
  year={2023}
}

@article{nguyen2019dg,
  author  = {Nguyen, Duc Duy and Wei, Guo-Wei},
  title   = {{DG-GL}: Differential geometry-based geometric learning of molecular datasets},
  journal = {International Journal for Numerical Methods in Biomedical Engineering},
  volume  = {35},
  number  = {3},
  pages   = {e3179},
  year    = {2019},
  doi     = {10.1002/cnm.3179}
}

@article{friedrichs1955differential,
  author  = {Friedrichs, K. O.},
  title   = {Differential forms on {R}iemannian manifolds},
  journal = {Communications on Pure and Applied Mathematics},
  volume  = {8},
  number  = {4},
  pages   = {551--590},
  year    = {1955},
  doi     = {10.1002/cpa.3160080408}
}

@incollection{desbrun2006discrete,
  author    = {Desbrun, Mathieu and Kanso, Eva and Tong, Yiying},
  title     = {Discrete differential forms for computational modeling},
  booktitle = {ACM SIGGRAPH 2006 Courses},
  pages     = {39--54},
  year      = {2006},
  publisher = {ACM},
  doi       = {10.1145/1185657.1185665}
}

@article{ribando2024graph,
  author  = {Ribando-Gros, Emily and Wang, Rui and Chen, Jiahui and Tong, Yiying and Wei, Guo-Wei},
  title   = {Combinatorial and {H}odge {L}aplacians: Similarities and Differences},
  journal = {SIAM Review},
  year    = {2024},
  doi     = {10.1137/22M1482299}
}

@book{hodge1941theory,
  author    = {Hodge, William V. D.},
  title     = {The Theory and Applications of Harmonic Integrals},
  publisher = {Cambridge University Press},
  year      = {1941}
}

@article{fiedler1973algebraic,
  author  = {Fiedler, Miroslav},
  title   = {Algebraic connectivity of graphs},
  journal = {Czechoslovak Mathematical Journal},
  volume  = {23},
  number  = {2},
  pages   = {298--305},
  year    = {1973}
}

@article{wang2025join,
  title={Join persistent homology (jph)-based machine learning for metalloprotein--ligand binding affinity prediction},
  author={Wang, Yaxing and Liu, Xiang and Zhang, Yipeng and Wang, Xiangjun and Xia, Kelin},
  journal={Journal of Chemical Information and Modeling},
  volume={65},
  number={6},
  pages={2785--2793},
  year={2025},
  publisher={ACS Publications}
}

@article{xiang2001extending,
  author  = {Xiang, Zhexin and Honig, Barry},
  title   = {Extending the accuracy limits of prediction for side-chain conformations},
  journal = {Journal of Molecular Biology},
  volume  = {311},
  number  = {2},
  pages   = {421--430},
  year    = {2001},
  doi     = {10.1006/jmbi.2001.4865}
}

@book{schwarz2006hodge,
  author    = {Schwarz, G{\"u}nter},
  title     = {Hodge Decomposition: A Method for Solving Boundary Value Problems},
  series    = {Lecture Notes in Mathematics},
  volume    = {1607},
  publisher = {Springer},
  address   = {Berlin},
  year      = {1995},
  doi       = {10.1007/BFb0095978}
}

@article{chithrananda2020chemberta,
  author  = {Chithrananda, Seyone and Grand, Gabriel and Ramsundar, Bharath},
  title   = {{ChemBERTa}: Large-Scale Self-Supervised Pretraining for Molecular Property Prediction},
  journal = {arXiv preprint arXiv:2010.09885},
  year    = {2020}
}

@article{wang2025jphgbt,
  author  = {Wang, Yaxing and Liu, Xiang and Zhang, Yipeng and Wang, Xiangjun and Xia, Kelin},
  title   = {Join Persistent Homology ({JPH})-Based Machine Learning for Metalloprotein--Ligand Binding Affinity Prediction},
  journal = {Journal of Chemical Information and Modeling},
  year    = {2025}
}

@article{hassanharrirou2020rosenet,
  author  = {Hassan-Harrirou, Hussein and Zhang, Ce and Lemmin, Thomas},
  title   = {{RosENet}: Improving Binding Affinity Prediction by Leveraging Molecular Mechanics Energies with an Ensemble of {3D} Convolutional Neural Networks},
  journal = {Journal of Chemical Information and Modeling},
  volume  = {60},
  number  = {6},
  pages   = {2791--2802},
  year    = {2020},
  doi     = {10.1021/acs.jcim.0c00075}
}

@article{durrant2011nnscore,
  author  = {Durrant, Jacob D. and McCammon, J. Andrew},
  title   = {{NNScore} 2.0: A Neural-Network Receptor--Ligand Scoring Function},
  journal = {Journal of Chemical Information and Modeling},
  volume  = {51},
  number  = {11},
  pages   = {2897--2903},
  year    = {2011},
  doi     = {10.1021/ci2003889}
}

@article{sanchezgarcia2022ppiaffinity,
  author  = {S{\'a}nchez-Garc{\'i}a, Ruben and Sorzano, Carlos Oscar S. and Carazo, Jos{\'e} Mar{\'i}a and Segura, Joan},
  title   = {{PPI-Affinity}: A Web Tool for the Prediction and Optimization of Protein--Peptide and Protein--Protein Binding Affinity},
  journal = {Journal of Proteome Research},
  volume  = {21},
  number  = {8},
  pages   = {1829--1841},
  year    = {2022},
  doi     = {10.1021/acs.jproteome.2c00020}
}

@article{feng2025caml,
  author  = {Feng, Hongsong and Suwayyid, Faisal and Zia, Mushal and Wee, JunJie and Hozumi, Yuta and Chen, Chun-Long and Wei, Guo-Wei},
  title   = {{CAML}: Commutative Algebra Machine Learning---A Case Study on Protein--Ligand Binding Affinity Prediction},
  journal = {Journal of Chemical Information and Modeling},
  year    = {2025},
  doi     = {10.1021/acs.jcim.5c00940}
}

@article{dudev2003first,
  author  = {Dudev, Todor and Lin, Yen-lin and Dudev, Milena and Lim, Carmay},
  title   = {First--Second Shell Interactions in Metal Binding Sites in Proteins: A {PDB} Survey and {DFT/CDM} Calculations},
  journal = {Journal of the American Chemical Society},
  volume  = {125},
  number  = {10},
  pages   = {3168--3180},
  year    = {2003},
  doi     = {10.1021/ja0209722}
}

@article{shirian2021metalbinding,
  author  = {Permyakov, Eugene A.},
  title   = {Metal Binding Proteins},
  journal = {Encyclopedia},
  volume  = {1},
  number  = {1},
  pages   = {261--292},
  year    = {2021},
  doi     = {10.3390/encyclopedia1010024}
}

@article{fleishman2014hotspot,
  author  = {Fleishman, Sarel J. and Baker, David},
  title   = {Hotspot-Centric De Novo Design of Protein Binders},
  journal = {Journal of Molecular Biology},
  volume  = {413},
  number  = {5},
  pages   = {1047--1062},
  year    = {2011},
  doi     = {10.1016/j.jmb.2011.09.001}
}

@article{kortemme2002simple,
  author  = {Kortemme, Tanja and Baker, David},
  title   = {A Simple Physical Model for Binding Energy Hot Spots in Protein--Protein Complexes},
  journal = {Proceedings of the National Academy of Sciences},
  volume  = {99},
  number  = {22},
  pages   = {14116--14121},
  year    = {2002},
  doi     = {10.1073/pnas.202485799}
}

@article{deng2011dbac,
  author  = {Deng, Lei and Guan, Jihong and Wei, Xiaoming and Yi, Yuan and Zhang, Qiwen and Zhou, Shuigeng},
  title   = {{DBAC}: A Simple Prediction Method for Protein Binding Hot Spots Based on Burial Levels and Deeply Buried Atomic Contacts},
  journal = {BMC Systems Biology},
  volume  = {5},
  number  = {Suppl 1},
  pages   = {S5},
  year    = {2011},
  doi     = {10.1186/1752-0509-5-S1-S5}
}

@article{chen2013protein,
  author  = {Chen, Jing and Sawyer, Nicholas and Regan, Lynne},
  title   = {Protein--Protein Interactions: General Trends in the Relationship Between Binding Affinity and Interfacial Buried Surface Area},
  journal = {Protein Science},
  volume  = {22},
  number  = {4},
  pages   = {510--515},
  year    = {2013},
  doi     = {10.1002/pro.2230}
}
